\newtheorem{defn}{Definition}
\newtheorem{thm}{Theorem}
\newtheorem{rmk}{Remark}
\newtheorem{myeg}{Example}
\newtheorem{lem}{Lemma}
\newcommand{\PP}{\mathbb{P}}
\newcommand{\EE}{\mathbb{E}}
\newcommand{\RR}{\mathbb{R}}
\newcommand{\NN}{\mathbb{N}}
\newcommand{\begit}{\begin{itemize}}
	\newcommand{\eit}{\end{itemize}}
\newcommand{\bseq}{\begin{subequations}}
	\newcommand{\eseq}{\end{subequations}}
\newcommand{\bpat}{\begin{pmatrix}}
	\newcommand{\epat}{\end{pmatrix}}
\newcommand{\bmat}{\begin{bmatrix}}
	\newcommand{\emat}{\end{bmatrix}}
\newcommand{\beq}{\begin{equation}}
\newcommand{\eeq}{\end{equation}}
\newcommand{\bc}{\begin{cases}}
\newcommand{\ec}{\end{cases}}
\newcommand{\beqs}{\begin{equation*}}
\newcommand{\eeqs}{\end{equation*}}
\title{\LARGE \bf
Sufficient Lyapunov conditions for exponential mean square stability of discrete-time systems with markovian delays
}
\author{Anastasia Impicciatore$^{1}$,~Maria Teresa Grifa$^{1}$,~Pierdomenico Pepe$^{1,2}$,~Alessandro D'Innocenzo$^{1,3}$% <-this % stops a space
\thanks{$^1$Department of Information Engineering, Computer Science, and Mathematics, University of L'Aquila, Via Vetoio, Loc.~Coppito, 67100 L'Aquila, Italy.}%
\thanks{$^2$Center of Excellence DEWS, Via Vetoio, Loc. Coppito, 67100 L'Aquila, Italy.}%
\thanks{$^3$Center of Excellence EX-EMERGE, Via Vetoio, Loc. Coppito, 67100 L'Aquila, Italy.}
\thanks{E.email: anastasia.impicciatore@graduate.univaq.it,~mariateresa.grifa@graduate.univaq.it,~pierdomenico.pepe@univaq.it,~alessandro.dinnocenzo@univaq.it}%
}
\begin{document}
\maketitle
\thispagestyle{empty}
\pagestyle{empty}
%%%%%%%%%%%%%%%%%%%%%%%%%%%%%%%%%%%%%%%%%%%%%%%%%%%%%%%%%%%%%%%%%%%%%%%%%%%%%%%%
\begin{abstract}
This paper introduces sufficient Lyapunov conditions guaranteeing exponential mean square stability of discrete-time systems with markovian delays.
We provide a transformation of the discrete-time system with markovian delays into a discrete-time Markov jump system. Then, we extend sufficient Lyapunov conditions existing for the global asymptotic stability of discrete-time systems with delays digraphs to the mean square stability of discrete-time systems with markovian delays.   
Finally, an example is provided to illustrate the efficiency and advantage of the proposed method.
\end{abstract}
%%%%%%%%%%%%%%%%%%%%%%%%%%%%%%%%%%%%%%%%%%%%%%%%%%%%%%%%%%%%%%%%%%%%%%%%%%%%%%%%%%%55
%\begin{IEEEkeywords}
%Lyapunov theorems, discrete-time delay systems, markovian delays, exponential mean square stability.
%\end{IEEEkeywords}
%
%%%%%%%%%%%%%%%%%%%%%%%%%%%%%%%%%%%%%%%%%%%%%%%%%%%%%%%%%%%%%%%%%%%%%%%%%%%%%%%%%
\section{INTRODUCTION}
\noindent This paper aims to study the nonlinear discrete-time delay systems with delays constrained to vary on a Markov chain (see \cite{Bortolin2019} for the linear case).
The stability analysis for discrete-time delay system is studied in \cite{leite2008robust, fridman2014introduction, gielen2013necessary, chen2016robust, Pepe2018OnLyap, silva2018robust,
baker2010development,  liu2007razumikhin, Pepe2019converse, de2018iss, liu2009input,hetel2008equivalence}.
Time-delays often lead to complex behaviors in the dynamics of a system and may lead to the failure of stability.
Constraints on time-delays can be described by means of the delays digraphs notion (see \cite{Pepe2020,grifa2020stability}). 
In recent years, the graph theory approach has been satisfactorily used in the development of stability theory for discrete-time switching systems with constrained switching signals (see \cite{Pepe2019converse, athanasopoulos2014stability, kundu2016graph, cao2020} and the references therein). 
The  motivations of modeling the constraints through a digraph
and the impact of this choice in establishing the stability results are presented in \cite{Pepe2020}. Constraints provided by bounded delay variations are studied in \cite{ Bortolin2019,de2018iss, Zhang2016}. 
In \cite{Bortolin2019}, the regulation problem for discrete-time linear systems with bounded unknown random state delay is presented.
In \cite{Zhang2016}, the problem of disturbance rejection control for markovian jump linear systems is investigated. 
The modelling framework for systems subject to markovian switching is given by discrete-time markovian switching systems, also known as {\it Markov jump systems}. There is a wide literature investigating this kind of systems. Discrete-time markovian switching systems are particularly useful in the modeling of systems subject to abrupt changes, such as Wireless Control Networks (WCNs). 
Markovian switching systems are good approximations of the stochastic characterization of WCNs models in presence of packet losses and induced random delays.
In \cite{Alur2011compositional,%
DiGirolamo2019codesign,%
Lun2019stabilizability,%
Lun2020ontheimpact} the use of markovian switching systems handles the challenges in analysis and co-design of wireless networked control systems, and allows to verify instability of a system due to bursts of packet loss when Bernoulli-like channel models fail. 
Moreover, the Markov modelling of the Wireless channel (see \cite{DiGirolamo2019codesign,%
Lun2019stabilizability,Sadeghi2008finite}) allows performance improvement in stabilizing control synthesis, as it is shown in \cite{Lun2020ontheimpact}.
Motivated by the above discussions, we aim to study discrete-time systems with markovian delays, linking the methodologies available for Markov jump systems and discrete-time systems with constrained delays. 
The link between discrete-time systems with delays and switching delay-free systems is provided in \cite{hetel2008equivalence,Pepe2020%
}.
We provide an exponential mean square stability  analysis for the class of considered systems, i.e., a stability analysis concerning the behaviour of the second moment of the state.
 The mean square stability of discrete-time markovian switching systems has been extensively analysed in the linear case (see \cite{COSTA2005,Lun2019robust}),
only few works presented in the literature investigate this stability notion in the nonlinear framework (see \cite{Patrinos2014stochastic,%
Tejada2010onnonlinear}). On performing the analysis,
we write the discrete-time delay system as a switching 
discrete-time system where the delays are constrained to adhere to a Markov chain. Then, we transform the switching systems with markovian delays to a Markov jump system.
The Lyapunov conditions guaranteeing global asymptotic stability of discrete-time delay systems with delays switches digraphs already exist in literature (see \cite{Pepe2020}). Sufficient Lyapunov conditions guaranteeing exponential mean square stability of discrete-time Markov jump systems are introduced in \cite{Patrinos2014stochastic,%
Impicciatore2020sufficient}.
Our contribution consists in the extension of the Lyapunov conditions
in \cite{Pepe2020} for discrete-time systems with delays digraph,
to the study of exponential mean square stability of discrete-time systems with markovian delays. 
We provide a methodology which makes use of multiple Lyapunov functions (see \cite{Pepe2019converse,Impicciatore2020sufficient}) depending on the mode of the Markov chain, that governs the switching delay. 
The remainder of the paper is organized as follows. In Section II, discrete-time systems with markovian delay signals are introduced.
In Section III, we provide the main result of the paper consisting of sufficient Lyapunov conditions guaranteeing exponential mean square stability. In Section IV, we illustrate a meaningful example showing the effectiveness of our result. Conclusions are provided in Section \ref{sec:conclusion}. The proofs are reported in the Appendix.
\subsection{Notation and basic definitions}
\noindent The symbols {\small$\NN,$} {\small$\RR,$} and {\small$\RR^+$} denote the set of non-negative integer numbers, the set of real numbers, and the set of non-negative real numbers, respectively.
For a given finite set {\small $D,$} {\small $card(D)$} denotes its cardinality. The notation \mbox{\small $\lVert x \rVert $} is used to denote the Euclidean norm of a vector \mbox{\small $x \in \RR^{n}.$} 
For any positive real \mbox{\small $\Delta$} and any positive integer \mbox{\small $n,$} the symbol \mbox{\small $\mathcal{C}$} denotes the space of functions mapping
\mbox{\small $\{ -\Delta, -\Delta+1, \dots, 0 \}$} into \mbox{\small $
\mathbb{R}^{n}.$} 
%%%%%%%%%%%%%%%%%%%%%%%%%%%%%%%%%%%%%
For \mbox{\small $\phi \in \mathcal{C},$ $\lVert \phi \rVert_{\infty} = \max\{ \lVert\phi(-j)\rVert : j= 0, 1, \dots, \Delta \}.$} For any non-negative integer \mbox{\small $c,$} (or for \mbox{\small $c=+\infty$}), for any function \mbox{\small $x : \{ -\Delta, -\Delta+1,\dots, c \} \rightarrow \mathbb{R}^{n},$} for any integer \mbox{\small $k \in [0, c] \cap \mathbb N$, $x_{k}$} is the function in \mbox{\small $\mathcal{C}$} defined, for \mbox{\small $\tau \in\{ -\Delta, -\Delta+1, \dots, 0\},$} as \mbox{\small $x_{k}(\tau) = x(k+\tau).$} 
\noindent
The function \mbox{\small$\mathrm{sat}:\RR\to [-1,1]$} is defined, for \mbox{\small $x\in\RR,$} as \mbox{\small $\mathrm{sat}(x)=\min\{1,\max\{x,-1\}\}.$}
\noindent We consider the stochastic basis defined by the quadruple \mbox{\small $\left(\Omega,\mathcal{G}, \{\mathcal{G}_k\},\mathbb{P}\right),$} where \mbox{\small $\Omega$} is the sample space, \mbox{\small $\mathcal{G}$} is the corresponding $\sigma$-algebra of events, \mbox{\small $\{\mathcal{G}_k\}_{k\in\NN}$} is the filtration, \mbox{\small $\mathbb{P}$} is the probability measure.
Let \mbox{\small $\EE\left[\cdot\right]$} denote the expectation of a random variable with respect to \mbox{\small $\mathbb{P},$} and let  \mbox{\small $\EE\left[\; \cdot\;|\mathcal{G}_k\right]$} denote the conditional expectation of a random variable on the filtration \mbox{\small $\{\mathcal{G}_k\}_{k\in\NN}.$}
\noindent The acronyms {\it EMSS} and {\it GAS} stand for exponential mean square stability or exponentially mean square stable, and global asymptotic stability or globally asymptotically stable, respectively.
\section{DISCRETE - TIME SYSTEMS WITH MARKOVIAN DELAYS}
\noindent Let us consider the discrete-time delay system \mbox{\small$\mathcal D$} of the form (see \cite{Pepe2020})
\begin{align}\label{eq:plain time delay system}
	x(k+1) &= f(x(k), x(k-d_{1}(k)),\dots , x(k-d_{r}(k)) ),  \nonumber\\ 
	x(\theta) &= \xi_{0}(\theta), \; \theta \in \{ -\Delta, -\Delta + 1, \dots, 0\},
\end{align}
where: \mbox{\small $k\in \NN;$} \mbox{\small $\Delta$} is a known positive integer, the maximum involved time delay; \mbox{\small $x(j) \in \RR^{n},$} \mbox{\small $j\ge -\Delta;$} for \mbox{\small $1 \leq i \leq r,$} \mbox{\small $d_{i}(k) \in \{0,1,\dots, \Delta \}$} is a time-varying time delay, \mbox{\small $r$} is a known positive integer; the function \mbox{\small $f:\RR^{n(r+1)} \rightarrow  \RR^{n}$}
satisfies the equality \mbox{\small $f(0,0, \dots, 0) = 0;$}   \mbox{\small $\xi_{0}\in \mathcal{C}{.}$}
Let \mbox{\small $d(k) = [ d_{1}(k) \;  d_{2}(k) \; \dots \; d_{r}(k)]^{T}{,}$}  \mbox{\small $k \in \NN,$} denote the vector collecting all time delays at time \mbox{\small $k.$}
Let \mbox{\small $D\subset\{0, 1,\dots,\Delta\}^{r}$} be the set of allowed values for the time-delays vector \mbox{\small $d(k).$} 
That is, for any \mbox{\small $k \in \NN,$} \mbox{\small $d(k) \in D.$}
%%%%%%%%%%%%%%%%%%%%%%%%%%
\noindent The system \eqref{eq:plain time delay system} can be rewritten by using the following equation, (see \cite{Pepe2018OnLyap} and the references therein):
\begin{align}\label{eq:Fofxk}
 x_{k+1} &= F(x_k, d(k)),\quad k\in\NN,\nonumber\\
 x_0 &= \xi_0,\quad \xi_0\in\mathcal{C},
\end{align}
\noindent \mbox{\small $x_k\in\mathcal{C},\,x_k(\theta)=x(k+\theta),\,\theta\in\{-\Delta,-\Delta+1,\ldots,0\},$} \mbox{\small $k\in\NN.$} The map \mbox{\small $F:\mathcal{C}\times D\to \mathcal{C}$} is defined, for \mbox{\small $\phi \in \mathcal{C},$} \mbox{\small$d=\bmat d_1 & d_2 &\ldots & d_r\emat^T\in D,$}
\begin{align}\label{eq:Fofphid}
    F(\phi, d ) (\theta) =
    \bc
    f(\phi(0),\phi(-d_1),\ldots, \phi(-d_r)),\,  \theta=0,\\
    \phi(\theta+1),\, \theta = -\Delta, -\Delta+1, \ldots, -1.
    \ec
\end{align}
%%%%
\noindent Let us define the Markov chain (hereafter MC) as \mbox{\small $\eta:\NN\to \mathcal{S},$} with \mbox{\small $\mathcal{S}\triangleq \{1,2,\ldots,s\},$} \mbox{\small$s=card(D).$} The transition probability matrix (hereafter TPM) of the MC is defined as 
\bseq\label{eq:TPMdef}
\beq\label{eq:TPMdef_1}
P\triangleq\bmat p_{ij}\emat_{i,j\in\mathcal{S}},\quad p_{ij}\triangleq \PP(\eta(k+1)=j|\eta(k)=i),
\eeq
\noindent for all $i,j\in\mathcal{S}$, and
\beq\label{eq:TPMdef_2}
{\displaystyle \sum_{j\in\mathcal{S}}} p_{ij}=1,\,\forall i\in\mathcal{S},\quad 0\leq p_{ij}\leq 1,\quad\forall i,j\in\mathcal{S}.
\eeq
\eseq
%%%%%%%%%%%%%%%%%%%%%%%%%%%%%%%%%%%%%%%%%%%%%%%%%%%%%%%%%%%%%%%%%%%%%%%%%%%%%%%%
\noindent Assume that the delay \mbox{\small $d(k+1),\,k\in\NN,$} depends only on the delay at the previous step \mbox{\small $d(k),$} \mbox{\small $k\in\NN,$} and assume that our prior knowledge on the transition from \mbox{\small $d(k)$} to \mbox{\small $d(k+1)$} is given by a transition probability.  \\
%%%%%%%%%%%%%%%%%%%%%%%%%%%%%%%%%%%%%%%%%%%%%%%%%%%%%%%%%%%%%%%%%%%%%%%%%%%%%%%%%
\noindent Let \mbox{\small $H:D \rightarrow \mathcal{S}$} be a bijective function defined for all \mbox{\small $\delta_i\in D,$} and for all \mbox{\small $i\in\mathcal{S},$} as
\beq\label{eq:bijective_H}
H(\delta_i)\triangleq i.
\eeq
\noindent The inverse function of \mbox{\small $H$} is \mbox{\small $H^{-1}: \mathcal{S} \rightarrow D,$} defined for all \mbox{\small $i\in\mathcal{S}$} and for all \mbox{\small $\delta_i\in D,$} as follows
\beq\label{eq:bijective_inverseH}
H^{-1}(i)\triangleq \delta_i.
\eeq
\noindent Consider \mbox{\small $p_{ij}$} defined in \eqref{eq:TPMdef_1}. By applying the definition of \mbox{\small $p_{ij}$} in \eqref{eq:TPMdef_1} and the definition of the functions \mbox{\small $H$} and \mbox{\small $H^{-1},$} the following equalities hold:
\begin{align}\label{eq:pdeltai_deltaj}
p_{ij}&=\PP\left( \eta(k+1)=j|\eta(k)=i \right)\nonumber\\
&=\PP\Big ( H(d(k+1))=H(\delta_j)|H(d(k))=H(\delta_i) \Big)\nonumber\\
&=\PP\left(d(k+1)=\delta_j|d(k)=\delta_i\right),
\end{align}
\noindent for all \mbox{\small $\delta_i,\delta_j\in D,$} for all \mbox{\small $i,j\in\mathcal{S}.$}\\
\noindent Consequently, the modes of the MC \mbox{\small $\{\eta(k)\}_{k\in\NN}$} with TPM defined by \eqref{eq:TPMdef} are associated with the delays in the set \mbox{\small $D,$} through the function \mbox{\small $H^{-1}.$}\\
\noindent Let \mbox{\small $\mathcal{E}(D)$} be the finite set of all pairs \mbox{\small $(\delta_i, \delta_j) \in D \times D,\,i,j\in\mathcal{S},$} such that, for any
\mbox{\small $k\in \NN,$} if \mbox{\small $d(k) = \delta_i,$} it is allowed \mbox{\small$d(k + 1) = \delta_j.$} We define the set \mbox{\small$\mathcal{E}(D)$} as follows,
\beq\label{eq:mathcalE_def}
\mathcal{E}(D)\triangleq\{(\delta_i,\delta_j)\in D\times D,\,\delta_i,\delta_j \in D,\, i,j\in\mathcal{S} \,|\, p_{ij}>0\}.
\eeq
%%%%%%%%%%%%%%%%%%%%%%%%%%%%%%%%%%%%%%%%%%%
\noindent In the following we define the Markov jump system that we consider throughout the paper (see \cite{COSTA2005,%
Lun2019robust} and the references therein).
%%%%%%%%%%%%%%%%%%%%%%%%%%%%%%%%%%%%%%%%%%%%%%%%%%%%%%%%%%%%%%%%%%%%%%%%%%%%%%%%%%%%%%%%%%%%%%%%%%%%%%%%%%%%%%%%%%%%%%%%%%%%%%%%%%%%%%%%%%%%%%%%%%%%%%%%%%%%%%%%%%%%%%%%%%%%%%%%%%%%%%%%%%%%%%%%%%%%%%%%%%%%%%%%%%%%%
\begin{defn}\label{def:Markov_jump_system}
\noindent Let {\small $\Sigma$} denote the Markov jump system defined on the stochastic basis \mbox{\small $\left(\Omega,\mathcal{G},\mathcal{G}_k,\mathbb{P}\right),$} as 
\begin{align}\label{eq:Sigma}
\Sigma\triangleq\left(\mathcal{D},P,H\right),
\end{align}
\noindent where \mbox{\small $\mathcal{D}$} is the system described by \eqref{eq:plain time delay system} and rewritten in the form \eqref{eq:Fofxk}, \mbox{\small $P$} is a known TPM defined by \eqref{eq:TPMdef} modeling the stochastic switching of the delays, and \mbox{\small $H$} is the bijective function defined by \eqref{eq:bijective_H}.
\end{defn}
\noindent From \eqref{eq:Fofxk}, the Markov jump system \mbox{\small $\Sigma$} can be written as follows: 
\begin{align}\label{eq:Fofxk_MarkovJump}
x_{k+1} &= F(x_k, H^{-1}(\eta(k))),\quad k\in\NN,\nonumber\\
x_0 &= \xi_0,\quad \xi_0\in\mathcal{C},
\end{align}
\noindent where \mbox{\small $x_k\in\mathcal{C},\,x_k(\theta)=x(k+\theta),\,\theta\in\{-\Delta,-\Delta+1,\ldots,0\},\,k\in\NN;$} \mbox{\small $\eta(k)\in\mathcal{S},$} \mbox{\small $k\in\NN$} is a MC with TPM \mbox{\small $P$} defined by \eqref{eq:TPMdef}. 
The map \mbox{\small $F$} is defined by \eqref{eq:Fofphid} and it can be rewritten as follows for \mbox{\small $\phi\in\mathcal{C},$} and for  \mbox{\small $i\in\mathcal{S}:$}
\begin{align}\label{eq:FofphiH1ofi}
    &F(\phi, H^{-1}(i)) (\theta) =\nonumber\\
    &=\bc
    f(\phi(0),\phi(-H_1^{-1}(i)),\ldots, \phi(-H_r^{-1}(i))),\,  \theta=0,\\
    \phi(\theta+1),\, \theta = -\Delta, -\Delta+1, \ldots, -1,
    \ec
\end{align}
\noindent with 
\mbox{\small$H^{-1}(i)=\bmat H^{-1}_1(i) & \ldots & H^{-1}_r(i)\emat^T\in D,\,\forall i\in\mathcal{S}.$}\\
\noindent Let \mbox{\small$x_k(\xi_0),$} \mbox{\small$k\in\NN,$} denote the trajectory that evolves according to \eqref{eq:Fofxk_MarkovJump}, corresponding to initial state \mbox{\small$\xi_0\in\mathcal{C}.$} Recall that \mbox{\small$x_k(\xi_0)(0)=x(k,\xi_0),$} \mbox{\small$k\in\NN.$}\\\noindent 
In the following example, we show step-by-step the methodology leading system \eqref{eq:plain time delay system} to system  \eqref{eq:Fofxk_MarkovJump}.
\begin{myeg}\label{ex:sat}
\noindent Consider the following scalar nonlinear system (see \cite[{\it Example 2}]{Pepe2020}) described by the following equation, for \mbox{\small$k\in\NN:$}
\beq\label{eq:ex1_1}
\bc
x(k+1)&=\mathrm{sat}(x(k))-\gamma \mathrm{sat}(x(k-d(k))),\\
x(\tau)&=\xi_0(\tau),\,\tau\in\{-2,-1,0\},
\ec
\eeq
\noindent with \mbox{\small$\xi_0\in\mathcal{C},$} \mbox{$x(k)\in\RR,$} \mbox{$\gamma\in [1,1.2].$} The delay \mbox{\small $d(k)\in D=\{0,2\}.$} The system described by \eqref{eq:ex1_1} is stable for \mbox{\small$d(k)=0,$} and it is unstable for \mbox{\small$d(k)=2$} for all \mbox{\small$k\in\NN.$}\\
\noindent We transform system \eqref{eq:ex1_1} to a system defined in the space of initial conditions using equation \eqref{eq:Fofxk}:
%%%%%%%%%%%%%%%%%%%%%%
\begin{align}\label{eq:F_ex}
    F(\phi, d ) (\theta) =
    \bc
    \mathrm{sat}(\phi(0))-\gamma \mathrm{sat}(\phi(-d)),\,  \theta=0,\\
    \phi(\theta+1),\, \theta = -2, -1.
    \ec
\end{align}
Consider the MC \mbox{\small $\{\eta_k\}_{k\in\NN}$} in Figure \ref{fig:MC_ex1}, with set of states \mbox{\small $\mathcal{S}=\{1,2\}.$}
We associate each delay in \mbox{\small $D$} to a mode of the Markov chain in \mbox{\small $\mathcal{S}.$}
\begin{figure}
\centering
\begin{tikzpicture}[->,node distance=3cm, auto]
\node[circle, draw=white](0)at (-1,1.5){$\eta(k):$};
\node[circle, draw=black](1)at (0,0){$1$};
\node[circle, draw=black](2)at (3,0){$2$};
\path (1)edge[loop above]node{$p$}(1);
\path (1)edge[bend left]node{$1-p$}(2);
\path (2)edge[bend left]node{$1-q$}(1);
\path (2)edge[loop above]node{$q$}(2);
\end{tikzpicture}	
\caption{The Figure depicts the state diagram of the Markov chain \mbox{\small$\eta(k)$} modeling the switching delay in the presented example: \mbox{\small$p$} stands for the probability of having a delay \mbox{\small$d(k+1)=0$} provided that the previous delay is \mbox{\small$d(k)=0,$} while \mbox{\small$q$} stands for the probability of having a delay \mbox{\small$d(k+1)=2,$} provided that the previous delay is \mbox{\small$d(k)=2.$}}\label{fig:MC_ex1}
\end{figure}
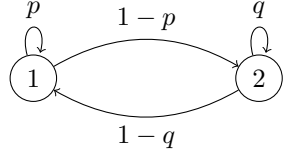
Let the bijective function \mbox{\small$H:D\to\mathcal{S}$} be defined as
\beq\label{eq:ex1_2}
H(d)=\bc
1 & \text{if }d=0,\\
2 & \text{if }d=2. 
\ec
\eeq
\noindent The function \mbox{\small$H^{-1}:\mathcal{S}\to D$} is thus defined as 
\beq\label{eq:ex1_3}
H^{-1}(i)=\bc 
0 & \text{if } i=1,\\
2 & \text{if } i=2.
\ec
\eeq
\noindent The TPM associated with the MC \mbox{\small$\{\eta_k\}_{k\in\NN}$} is given by
\beq\label{eq:ex1_3_1}
P=\bmat p & 1-p\\
      1-q & q\emat,\quad p,q\in (0,1).
\eeq
% \noindent As consequence of the structure of $P$, the set $\mathcal{E}(D)$ is given by
%\beq\label{eq:ex1_mathcalEofD}
%\mathcal{E}(D)=\{(0,0),(0,2),(2,0),(2,2)\}.
%\eeq
%\noindent Using \eqref{eq:F_ex}, \eqref{eq:ex1_3}, we obtain, for $\phi\in\mathcal{C}$:
%\begin{align}\label{eq:ex1_3_2}
%    &F(\phi, H^{-1}(1)) (\theta) =\bc
%    (1-\gamma)\mathrm{sat}(\phi(0)),\,  \theta=0,\ec
%\end{align}
%\\
%    \phi(\theta+1),\, \theta = -\Delta, -\Delta+1, \ldots, -1,
%    \ec\nonumber\\
%    &F(\phi, H^{-1}(2)) (\theta)=\bc 
%    \mathrm{sat}(\phi(0))-\gamma \mathrm{sat}(\phi(-2)),\,  \theta=0,\\
%    \phi(\theta+1),\, \theta = -\Delta, -\Delta+1, \ldots, -1.
%    \ec
Hence, we obtain a Markov jump system \mbox{\small$\Sigma$} where \mbox{\small$\mathcal{D}$} is \eqref{eq:ex1_1}, \mbox{\small$P$} is defined in \eqref{eq:ex1_3_1} and \mbox{\small$H$} is defined in \eqref{eq:ex1_2}. 
\end{myeg}
\noindent\begin{rmk}\label{rem:xofk_random_variable}
Notice that the variable \mbox{\small$x(k,\xi_0)\in\RR^n,$} \mbox{\small$\xi_0\in\mathcal{C},$} \mbox{\small$k\in\NN,$} is a random variable on the stochastic basis \mbox{\small$\left(\Omega,\mathcal{G},\mathcal{G}_k,\mathbb{P}\right),$} since the delay evolves according to a discrete-time MC, with given transition probabilities. Thus, we are interested in the behaviour of the second moment of \mbox{\small$x(k,\xi_0),$} \mbox{\small$k\in\NN,$} \mbox{\small$\xi_0\in\mathcal{C}.$} 
\end{rmk}
%%%%%%%%%%%%%%%%%%%%%%%%%%
%%%%%%%%%%%%%%%%%%%%%%%%%%%%%%%%%%%%%%%
\begin{defn}\label{def:EMSS}
	The Markov jump system \mbox{\small$\Sigma$} is {\it EMSS}
	if there exist \mbox{\small$M,\,\zeta \in \RR^{+}$} with \mbox{\small$M \geq 1$} and \mbox{\small$0< \zeta < 1,$} such that for any \mbox{\small$\xi_{0} \in \mathcal{C},$}  the following inequality holds for any \mbox{\small$k\in\NN,$}
\begin{align}
	    \EE[ \lVert x(k, \xi_{0})\rVert^{2}] \leq M \zeta^k \Big( \lVert \xi_{0}\rVert_{\infty}\Big )^{2}.
\end{align}
\end{defn}
%%%%%%%%%%%%%%%%%%
\section{Main result}
\noindent In this section, we provide the main result of the paper. We derive sufficient Lyapunov conditions guaranteeing the {\it EMSS} of system \mbox{\small $\Sigma.$}\\
\noindent Let us consider a scalar function \mbox{\small$V :\mathcal{C}\times D \to \RR^{+}.$} Let us associate to \mbox{\small$V$} the operator \mbox{\small$\mathcal{L}V:\mathcal{C}\times D \to \RR,$} defined for \mbox{\small$\phi\in\mathcal{C},$} \mbox{\small$i\in\mathcal{S},$} as
\begin{align}\label{eq:mathcalLVdef}
&\mathcal{L} V (\phi, H^{-1}(i)) \triangleq\nonumber\\
		&{\displaystyle \sum_{j\in \mathcal{S}}} p_{ij} V\Big( F(\phi,H^{-1}(i) ), H^{-1}(j)\Big )- V(\phi, H^{-1}(i)),
\end{align}
\noindent with \mbox{\small$H^{-1}$} defined in \eqref{eq:bijective_inverseH}.
\begin{thm}\label{thm:msscomplete}
\noindent Assume there exist a function 
\mbox{$V :\mathcal{C}\times D \to \RR^{+}$} and real positive numbers \mbox{\small $\alpha_{i},$} \mbox{\small$i=1,2,3,$} such that, for all \mbox{\small$\phi \in \mathcal{C},$} for all \mbox{\small$i\in\mathcal{S},$} 
the following inequalities hold:
\begin{itemize}
    \item[$i)$] $\alpha_{1}\lVert \phi(0) \rVert ^2 \leq V(\phi, H^{-1}(i)) \leq \alpha_{2}\lVert \phi \rVert_{\infty} ^2,$
    \item[$ii)$]  
	$\mathcal{L} V (\phi, H^{-1}(i)) \leq - \alpha_3 \lVert \phi(0) \rVert ^2,$
\end{itemize}
where {\small $H^{-1}$} is defined in \eqref{eq:bijective_inverseH}.\\
\noindent Then, the system \mbox{\small$\Sigma$} is {\it EMSS}.
\end{thm}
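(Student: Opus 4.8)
The strategy is to exploit the fact that the pair $(x_k,\eta(k))$ is a Markov process and that $V(x_k,H^{-1}(\eta(k)))$ behaves like a supermartingale modulo the dissipation term $-\alpha_3\|x(k)\|^2$. Concretely, I would set $W(k)\triangleq V(x_k(\xi_0),H^{-1}(\eta(k)))$ and first show, using the definition \eqref{eq:mathcalLVdef} of $\mathcal{L}V$, the tower property of conditional expectation, and the Markov property of $\{\eta(k)\}$, that
\[
\EE\big[W(k+1)\,\big|\,\mathcal{G}_k\big]=\mathcal{L}V\big(x_k,H^{-1}(\eta(k))\big)+W(k)\le W(k)-\alpha_3\|x(k,\xi_0)\|^2
\]
almost surely, where the inequality is hypothesis $ii)$. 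Taking full expectations and iterating gives $\EE[W(k)]\le \EE[W(0)]-\alpha_3\sum_{\ell=0}^{k-1}\EE[\|x(\ell,\xi_0)\|^2]$; in particular $\EE[W(k)]$ is nonincreasing and, since $W\ge 0$, the series $\sum_\ell\EE[\|x(\ell,\xi_0)\|^2]$ converges, already yielding a weak (summable, hence asymptotic) form of mean-square stability. The left inequality in $i)$ bounds $\EE[\|x(k,\xi_0)\|^2]\le\alpha_1^{-1}\EE[W(k)]$, and the right inequality bounds $\EE[W(0)]=V(\xi_0,H^{-1}(\eta(0)))\le\alpha_2\|\xi_0\|_\infty^2$.

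To upgrade this to the \emph{exponential} estimate in Definition \ref{def:EMSS}, the key extra ingredient is to turn the one-step decrease into a geometric contraction over a window of length $\Delta+1$. The obstacle is that hypothesis $ii)$ only controls $\|\phi(0)\|$, i.e. the most recent component of the state, whereas $V$ is bounded above by $\|\phi\|_\infty^2$, which sees the whole history window. So a single step need not decrease $V$ by a fixed fraction of itself. The standard fix is a Razumikhin/Halanay-type argument: from the relation $x_{k+1}=F(x_k,H^{-1}(\eta(k)))$ the shift structure in \eqref{eq:FofphiH1ofi} means that after $\Delta+1$ steps the window $x_{k+\Delta+1}$ is composed entirely of "new" values $x(k+1),\dots,x(k+\Delta+1)$. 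Summing hypothesis $ii)$ over the block $\ell=k,\dots,k+\Delta$ gives
\[
\EE\big[W(k+\Delta+1)\big]\le \EE\big[W(k)\big]-\alpha_3\sum_{\ell=k}^{k+\Delta}\EE\big[\|x(\ell,\xi_0)\|^2\big]
\le \EE\big[W(k)\big]-\frac{\alpha_3}{\alpha_2}\,\EE\big[W(k+\Delta+1)\big],
\]
where the last step uses $\|x(\ell,\xi_0)\|^2\ge\alpha_2^{-1}V(x_{\ell+?},\cdot)$ applied to the appropriate shifted argument so that the sum dominates $\|x_{k+\Delta+1}\|_\infty^2\ge\alpha_2^{-1}W(k+\Delta+1)$; this is the one place requiring care in lining up indices against the definition of $\|\cdot\|_\infty$ and \eqref{eq:Fofphid}. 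Rearranging yields $\EE[W(k+\Delta+1)]\le\lambda\,\EE[W(k)]$ with $\lambda=\alpha_2/(\alpha_2+\alpha_3)\in(0,1)$.

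Finally I would iterate this block contraction: writing any $k=m(\Delta+1)+\rho$ with $0\le\rho\le\Delta$, monotonicity of $\EE[W(\cdot)]$ on the intermediate steps and the block estimate give $\EE[W(k)]\le\lambda^{m}\EE[W(0)]\le\lambda^{-1}\big(\lambda^{1/(\Delta+1)}\big)^{k}\EE[W(0)]$. Combining with the sandwich in $i)$,
\[
\EE\big[\|x(k,\xi_0)\|^2\big]\le\frac{1}{\alpha_1}\EE[W(k)]\le\frac{\alpha_2}{\alpha_1\lambda}\,\zeta^{k}\,\|\xi_0\|_\infty^2,
\]
with $\zeta\triangleq\lambda^{1/(\Delta+1)}\in(0,1)$ and $M\triangleq\alpha_2/(\alpha_1\lambda)\ge 1$, which is exactly the EMSS bound. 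The main obstacle throughout is the mismatch between the pointwise dissipation $\|\phi(0)\|^2$ in $ii)$ and the full-window upper bound $\|\phi\|_\infty^2$ in $i)$; resolving it is precisely what forces the $(\Delta+1)$-step blocking and the careful index bookkeeping via the shift map $F$.
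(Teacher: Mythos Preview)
Your approach is correct and genuinely different from the paper's. You work directly with $V$ and obtain geometric decay by a block-of-$(\Delta+1)$ argument, whereas the paper modifies the Lyapunov function once and for all: it defines
\[
W(\phi,H^{-1}(i))=V(\phi,H^{-1}(i))+\max_{\theta=1,\dots,\Delta}e^{-\theta}\alpha_3\lVert\phi(-\theta)\rVert^{2},
\]
checks that this new $W$ satisfies the \emph{strong} dissipation $\mathcal{L}W(\phi,H^{-1}(i))\le -\beta_3\lVert\phi\rVert_\infty^2$ (Lemma~\ref{lem:1}), and then a one-step contraction $\EE[W(k+1)]\le(1-\beta_3/\beta_2)\EE[W(k)]$ finishes the job (Lemma~\ref{lem:mssnew}). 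The paper's construction is a Krasovskii-type augmentation that absorbs the past $\Delta$ values into the functional; your route is more in the spirit of supermartingale/Halanay arguments and avoids introducing a second functional, at the price of the $(\Delta+1)$-step bookkeeping. Both give the same qualitative constants (a rate like $\lambda^{1/(\Delta+1)}$ versus $1-(1-e^{-1})\alpha_3 e^{-\Delta}/\beta_2$).

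One caveat: as written, your key inequality $\sum_{\ell=k}^{k+\Delta}\lVert x(\ell)\rVert^{2}\ge\lVert x_{k+\Delta+1}\rVert_\infty^{2}$ is off by one, since $\lVert x_{k+\Delta+1}\rVert_\infty^{2}=\max_{\ell=k+1,\dots,k+\Delta+1}\lVert x(\ell)\rVert^{2}$ involves $\lVert x(k+\Delta+1)\rVert^{2}$, which is not in the sum. The clean fix is to bound the sum below by $\lVert x_{k+\Delta}\rVert_\infty^{2}\ge\alpha_2^{-1}W(k+\Delta)$ and then use the monotonicity $\EE[W(k+\Delta)]\ge\EE[W(k+\Delta+1)]$ that you already established; the contraction $\EE[W(k+\Delta+1)]\le\lambda\,\EE[W(k)]$ with $\lambda=\alpha_2/(\alpha_2+\alpha_3)$ then follows. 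You flagged this as the place requiring care, and indeed it is.
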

%%%%%%%%%%%%%%%%%%%%%%%%%%%%%%%
\begin{proof}
See the Appendix.
\end{proof}
%%%%%%%%%%%%%%%%%%%%%%%%%%%
\begin{rmk}\label{rem:emphasis_our_conditions}
Notice that the classical representation of system $\Sigma$ using a Markov jump system allows to use the conditions in \cite{Impicciatore2020sufficient,Patrinos2014stochastic}.
%Nevertheless, conditions $(i)-(ii)$ in Theorem \ref{thm:msscomplete} are easier to use with respect to conditions in \cite{Patrinos2014stochastic,Impicciatore2020sufficient}. 
An important feature of the involved Lyapunov inequalities, shared with the cases of delay-dependent and delay-independent Lyapunov functions (see \cite{Pepe2020,Pepe2018OnLyap}), is that lower bound of Lyapunov functions, as well as of the related difference
operators, are given in a weaker form with respect to the Lyapunov conditions for Markov jump systems (see for instance \cite{Impicciatore2020sufficient,Patrinos2014stochastic})
Indeed, the lower bound of condition $(i)$ in Theorem \ref{thm:msscomplete} and the inequality in condition $(ii)$ in Theorem \ref{thm:msscomplete} do not involve $\phi\in\mathcal{C}$, but only $\phi(0)\in\RR^n$.
\end{rmk}
\section{EXAMPLE}
In this section, we aim to study the {\it EMSS} property of the system \mbox{\small$\Sigma$} obtained in Example \ref{ex:sat} starting from the discrete-time delay system \eqref{eq:ex1_1}.\\
We analyze the Markov jump system resulting by the application of our methodology. As mentioned before, we obtain a switching system with two modes: one is stable and the other one is unstable. \\
\noindent Notice that, as consequence of the structure of \mbox{\small $P$} in \eqref{eq:ex1_3_1}, the set \mbox{\small$\mathcal{E}(D)$} is given by
\mbox{\small$\mathcal{E}(D)=\{(0,0),(0,2),(2,0),(2,2)\}.$}
%%%%%%%%%%%%%%%%%%%
\noindent In the following, we want to verify whether conditions $(i)$ and $(ii)$ of Theorem \ref{thm:msscomplete} are satisfied.\\
\noindent We consider a candidate Lyapunov function \mbox{\small$V:\mathcal{C}\times D\to\RR^+$} 
defined, for \mbox{\small$\phi \in \mathcal{C},$} \mbox{\small$i\in \mathcal{S},$} as 
\beq\label{eq:ex1_4}
V(\phi,H^{-1}(i))=\lambda_i{\displaystyle\sup_{j=0,1,2}} 2^{j-1}\gamma^j c^{-j}\|\phi(-j)\|^2,
\eeq
\noindent with \mbox{\small$\lambda_i\in\RR^+,$} \mbox{\small$i\in\mathcal{S},$} \mbox{\small$\gamma\in [1, 1.2],$} \mbox{\small$1< c \leq e.$}\\
% \noindent Pick $\alpha_1,\,\alpha_2\in\RR^+$ as follows
% \beq\label{eq:ex1_5}
% \alpha_1={\displaystyle\min_{i\in\mathcal{S}}}\lambda_i,\quad \alpha_2= 2\gamma^2 {\displaystyle\max_{i\in\mathcal{S}}}\lambda_i.
% \eeq 
\noindent Pick \mbox{\small$\alpha_1={\displaystyle\min_{i\in\mathcal{S}}}\lambda_i,\,\alpha_2= 2\gamma^2 {\displaystyle\max_{i\in\mathcal{S}}}\lambda_i.$}\\ 
\noindent Thus, condition $(i)$ of Theorem \ref{thm:msscomplete} is satisfied. In order to verify condition $(ii)$, we consider the expression of \mbox{\small$\mathcal{L}V(\phi,H^{-1}(i)),$} for all \mbox{\small$\phi\in\mathcal{C},$} \mbox{\small$i\in\mathcal{S}.$} \\\noindent When we consider the expression of \mbox{\small$\mathcal{L}V(\phi,H^{-1}(1)),$} from \eqref{eq:mathcalLVdef} we obtain the following equality:
\begin{align}\label{eq:ex1_6_0}
    &\mathcal{L}V(\phi,H^{-1}(1))\nonumber\\
&=p V\left(F(\phi,H^{-1}(1)),H^{-1}(1)\right)
\nonumber\\
& +(1-p)V\left(F(\phi,H^{-1}(1)),H^{-1}(2)\right)- V(\phi,H^{-1}(1)).
\end{align}
\noindent From \eqref{eq:ex1_4} and \eqref{eq:ex1_6_0}, we obtain the following equality:
\begin{align}\label{eq:ex1_6_1}
&\mathcal{L}V(\phi,H^{-1}(1))=\nonumber\\
&=\!(p\lambda_1\!+\!(1\!-\!p)\lambda_2)\!\!\sup_{j=0,1,2}2^{j-1}\gamma^j c^{-j}\|F(\phi,H^{-1}(1)\!)\!(-j)\|^2
\nonumber\\
&-\lambda_1\sup_{j=0,1,2}2^{j-1}\gamma^j c^{-j}\|\phi(-j)\|^2.
\end{align}
By \eqref{eq:ex1_3}, we have:
\begin{align}\label{eq:exmy6}
&\mathcal{L}V(\phi,H^{-1}(1))\nonumber\\
&\leq (p\lambda_1+(1-p)\lambda_2)2^{-1}\lVert(1-\gamma)\mathrm{sat}(\phi(0))\rVert^2
\nonumber\\
&+ (p\lambda_1+(1-p)\lambda_2)\sup_{j=1,2}2^{j-1}\gamma^j c^{-j}\|\phi(-j+1)\|^2\nonumber\\
&-\lambda_1\sup_{j=0,1,2}2^{j-1}\gamma^j c^{-j}\lVert \phi(-j)\rVert^2.
\end{align}
\noindent By the properties of the supremum, the following inequality holds
\begin{align}\label{eq:ex_1_6_2}
&\sup_{j=1,2}2^{j-1}\gamma^j c^{-j}\|\phi(-j+1)\|^2\leq\nonumber\\ &\sup_{j=1,2,3}2^{j-1}\gamma^j c^{-j}\|\phi(-j+1)\|^2.
\end{align}
\noindent By changing the index variable in the supremum, we can write
\begin{align}\label{eq:ex_1_6_3}
    &\sup_{j=1,2,3}2^{j-1}\gamma^j c^{-j}\|\phi(-j+1)\|^2=\nonumber\\
    &2\gamma c^{-1}\sup_{j=1,2,3}2^{(j-1)-1}\gamma^{(j-1)} c^{-(j-1)}\|\phi(-j+1)\|^2=\nonumber\\
    &2\gamma c^{-1}\sup_{\theta=0,1,2}2^{\theta -1}\gamma^\theta c^{-\theta}\lVert\phi(-\theta)\rVert^2.
\end{align}
\noindent Thus, from \eqref{eq:exmy6}, \eqref{eq:ex_1_6_2}, \eqref{eq:ex_1_6_3}, we obtain the following inequalities
\begin{align}\label{eq:ex1_6_4}
&\mathcal{L}V(\phi,H^{-1}(1))\leq (p\lambda_1 +(1-p)\lambda_2)\Big(2^{-1}(1-\gamma)^2\lVert\phi(0)\rVert^2+\nonumber\\
&+2\gamma c^{-1}\sup_{\theta=0,1,2}2^{\theta-1}\gamma^{\theta} c^{-\theta}\lVert \phi(-\theta)\rVert^2\Big)+\nonumber\\
&-\lambda_1\sup_{\theta=0,1,2}2^{\theta-1}\gamma^{\theta} c^{-\theta}\lVert \phi(-\theta)\rVert^2\nonumber\\
&\leq (p\lambda_1 +(1-p)\lambda_2)\left((1-\gamma)^2+2\gamma c^{-1}\right)\nonumber\\
&\times\sup_{\theta=0,1,2}2^{\theta-1}\gamma^{\theta} c^{-\theta}\lVert \phi(-\theta)\rVert^2\nonumber\\
&-\lambda_1\sup_{\theta=0,1,2}2^{\theta-1}\gamma^{\theta} c^{-\theta}\lVert \phi(-\theta)\rVert^2.
\end{align}
% \vspace*{-1.6cm}
\noindent By defining \mbox{\small$\omega_1$} as follows,
%\vspace*{-0.5cm}
\beq\label{eq:ex1_6_4_1}
\omega_1\triangleq\lambda_1\left[1-\left(p+(1-p)\dfrac{\lambda_2}{\lambda_1}\right)\left((1\!-\!\gamma)^2\!+\!2\gamma c^{-1}\right)\right],
\eeq
% \vspace*{-0.5cm}
\noindent we get
%\vspace*{-0.5cm}
\beq\label{eq:ex1_6_5}
\mathcal{L}V(\phi,H^{-1}(1))\leq -\omega_1{\displaystyle\sup_{\theta=0,1,2}} 2^{\theta-1}\gamma^{\theta} c^{-\theta}\lVert \phi(-\theta)\rVert^2.
\eeq
%%%%%%%%%%%%%%%%%%%%%%%%%%
\noindent When we consider the expression of \mbox{\small$\mathcal{L}V(\phi,H^{-1}(2)),$} we obtain:
%%%%%%%%%%%%%%%%%%%%%%%%%%%%%%
\begin{align}\label{eq:ex1_6_6}
&\mathcal{L}V(\phi,H^{-1}(2))=(1-q)V(F(\phi,H^{-1}(2)),H^{-1}(1)) \nonumber\\
&+ q V(F(\phi,H^{-1}(2)),H^{-1}(2))-V(\phi,H^{-1}(2)).
\end{align}
\noindent From \eqref{eq:ex1_6_6}, and \eqref{eq:ex1_4}, the following equality holds:
%%%%%%%%%%%%%%%%%%%%%%%%%%%%%%%%%%%%%%%%%%%%%
\begin{align}\label{eq:ex1_6_7}
&\mathcal{L}V(\phi,H^{-1}(2))=\nonumber\\
&=\left((1-q)\lambda_1 + q\lambda_2\right)\!\!\sup_{j=0,1,2}\!2^{j-1}\gamma^j c^{-j}\lVert F(\phi,H^{-1}(2))(-j)\!\rVert^2\nonumber\\
&-\lambda_2\sup_{j=0,1,2} 2^{j-1}\gamma^j c^{-j}\lVert \phi(-j)\rVert^2.
\end{align}
\noindent From \eqref{eq:ex1_6_7},  applying the properties of the supremum, it follows that
\begin{align}\label{eq:ex1_6_8}
&\mathcal{L}V(\phi,H^{-1}(2))\nonumber\\ 
&\leq\left((1-q)\lambda_1 +q \lambda_2\right)\Big( 2^{-1}\lVert F(\phi,H^{-1}(2))(0)\rVert^2 \nonumber\\
&+\sup_{j=1,2} 2^{j-1}\gamma^j c^{-j}\lVert \phi(-j+1)\rVert^2
\Big)\nonumber\\
&
-\lambda_2\sup_{j=0,1,2}2^{j-1}\gamma^j c^{-j}\lVert\phi(-j)\rVert^2.
\end{align}
\noindent From \eqref{eq:ex1_6_8}, we have
%%%%%%%%%%%%%%%%%%%%%%%%%%%%%%%%%%%
\begin{align}\label{eq:ex1_6_9}
&\mathcal{L}V(\phi,H^{-1}(2))\nonumber\\
&\leq\left((1-q)\lambda_1 +q\lambda_2\right)\Big( 2^{-1}
\lVert \mathrm{sat}(\phi(0))-\gamma \mathrm{sat}(\phi(-2))\rVert^2 +\nonumber\\
&+2\gamma c^{-1}\sup_{j=1,2} 2^{(j-1)-1}\gamma^{j-1} c^{-j+1}\Vert\phi(-j+1)\rVert^2\Big)\nonumber\\
&-\lambda_2\sup_{\theta=0,1,2} 2^{\theta-1}\gamma^{\theta}c^{-\theta}\lVert\phi(-\theta)\rVert^2.
\end{align}
\noindent From \eqref{eq:ex1_6_9}, by applying the properties of the Euclidean norm, Young's inequality and the properties of the function $\mathrm{sat}$ the following inequalities hold
\begin{align}\label{eq:ex1_6_10}
  &\mathcal{L}V(\phi,H^{-1}(2))\nonumber\\  &\leq\left((1-q)\lambda_1+q\lambda_2\right)\Big(\lVert\phi(0)\rVert^2+\gamma^2\lVert 
\phi(-2)\rVert^2\nonumber\\
&+2\gamma c^{-1}\sup_{\theta=0,1,2} 2^{\theta-1}\gamma^{\theta} c^{-\theta}\lVert\phi(-\theta)\rVert^2\Big)\nonumber\\
&-\lambda_2\sup_{\theta=0,1,2} 2^{\theta-1}\gamma^{\theta}c^{-\theta}\lVert \phi(-\theta)\rVert^2\nonumber\\
&\leq\left((1-q)\lambda_1+q\lambda_2\right)\Big(\left(2+2^{-1}c^2+2\gamma c^{-1}\right)\nonumber\\
&\times\sup_{\theta=0,1,2}2^{\theta -1}\gamma^{\theta}c^{-\theta}\lVert\phi(-\theta)\rVert
^2\Big)\nonumber\\
&-\lambda_2
\sup_{\theta=0,1,2}2^{\theta-1}\gamma^{\theta}c^{-\theta}\lVert\phi(-\theta)\rVert^2.
\end{align}
\noindent From \eqref{eq:ex1_6_10}, by defining \mbox{\small$\omega_2$} as follows,
\beq\label{eq:ex1_6_11}
\omega_2\triangleq\lambda_2\left[1-\left(q+(1-q)\dfrac{\lambda_1}{\lambda_2}\right)\left(2+2^{-1}c^2+2\gamma c^{-1}\right)\right],
\eeq
\noindent we obtain
\begin{align}\label{eq:ex1_6_12}
\mathcal{L}V(\phi,H^{-1}(2))\leq -\omega_2{\displaystyle\sup_{\theta =0,1,2}} 2^{\theta -1}\gamma^{\theta}c^{-\theta}\lVert \phi(-\theta)\rVert ^2.   
\end{align}
\noindent Under the following constraints
\bseq\label{eq:ex1_7}
\beq\label{eq:ex1_7_1}
L_B<\dfrac{\lambda_2}{\lambda_1}<U_B,
\eeq
\noindent with
\beq\label{eq:ex1_7_2}
U_B=\dfrac{1-((1-\gamma)^2+2\gamma c^{-1})p}{((1-\gamma)^2+2\gamma c^{-1})(1-p)},
\eeq
%%%%%%%%%%%%%%%%%%%%%%%%%%%%%%%%%%%%%%%%%%%%%%%
\beq\label{eq:ex1_7_3}
L_B=\dfrac{\left(4+c^{2}+4\gamma c^{-1}\right)(1-q)}{2-\left(4+c^{2}+4\gamma c^{-1}\right)q},
\eeq
%%%%%%%%%%%%%%%%%%%%%%%%%%%%%%%%%%%%%%%%%%%%%%
\beq\label{eq:ex1_7_4}
(p,q)\in (0,1)\times (0,1),\, q<\dfrac{2}{4+c^{2}+4\gamma c^{-1}};
\eeq
\eseq
\noindent we obtain that \mbox{\small$\omega_1,\,\omega_2\in\RR^+.$} Thus, from \eqref{eq:ex1_6_5} and \eqref{eq:ex1_6_12} the following inequality holds, for all \mbox{\small$\phi\in\mathcal{C},$} for all \mbox{\small$i\in\mathcal{S},$}
\beq\label{eq:ex1_8}
\mathcal{L}V\left(\phi,H^{-1}(i)\right)\leq -\alpha_3\lVert\phi(0)\rVert^2,
\eeq
\noindent with \mbox{\small$\alpha_3\in\RR^+,$} defined as \mbox{\small $\alpha_3\triangleq\frac{1}{2}\min\{\omega_1,\omega_2\}.$}
\noindent Thus, condition $(ii)$ of Theorem \ref{thm:msscomplete} is satisfied and the system \eqref{eq:ex1_1} with 
switches delays governed by the MC 
\mbox{\small$\{\eta(k)\}_{k\in\NN},$} with TPM P defined in \eqref{eq:ex1_3_1}, and with the function \mbox{\small$H$} defined in \eqref{eq:ex1_3} is {\it EMSS}.
\begin{rmk}\label{rem:lastrem}
Notice that, the conditions for the exponential mean square stability of Markov jump systems in \cite[Theorem 20(b)]{Patrinos2014stochastic} (see also \cite{Impicciatore2020sufficient})  
cannot be applied in the example \eqref{eq:ex1_1}
on the extended state in \mbox{\small$\RR^{n(\Delta +1)}.$}
\end{rmk}
\subsection{Statistical results} 
\noindent In Figure \ref{fig:Sim1_3stable}, we present Montecarlo simulations of
the trajectories generated by the system \eqref{eq:ex1_1},  considering values of the pairs \mbox{\small$(p,q)$} such that conditions $(i)$ and $(ii)$ of Theorem \ref{thm:msscomplete} are satisfied.
The yellow trajectories correspond to the state trajectories associated with different switching paths (that are admissible according to \mbox{\small$P$}), the maximum and the minimum trajectory are plotted in blue and green, respectively. Finally, the red line corresponds to the average evolution of the state trajectories. From Figure \ref{fig:Sim1_3stable}, we observe that trajectories decrease exponentially and converge to zero. This result reflects the analysis presented in this section.
%%%%%%%%%%%%%%%%%%%%%%%%%%%%%%%%%%%%%%%%%%%%%%%%%%%%%%%%%%%%%%%%%%%%%%%%%%%%%%%%%%%%%%%%%%%%%%%%%%%%%%%
%FIGURES
\begin{figure}[h]
      \centering
\includegraphics[scale=0.19]{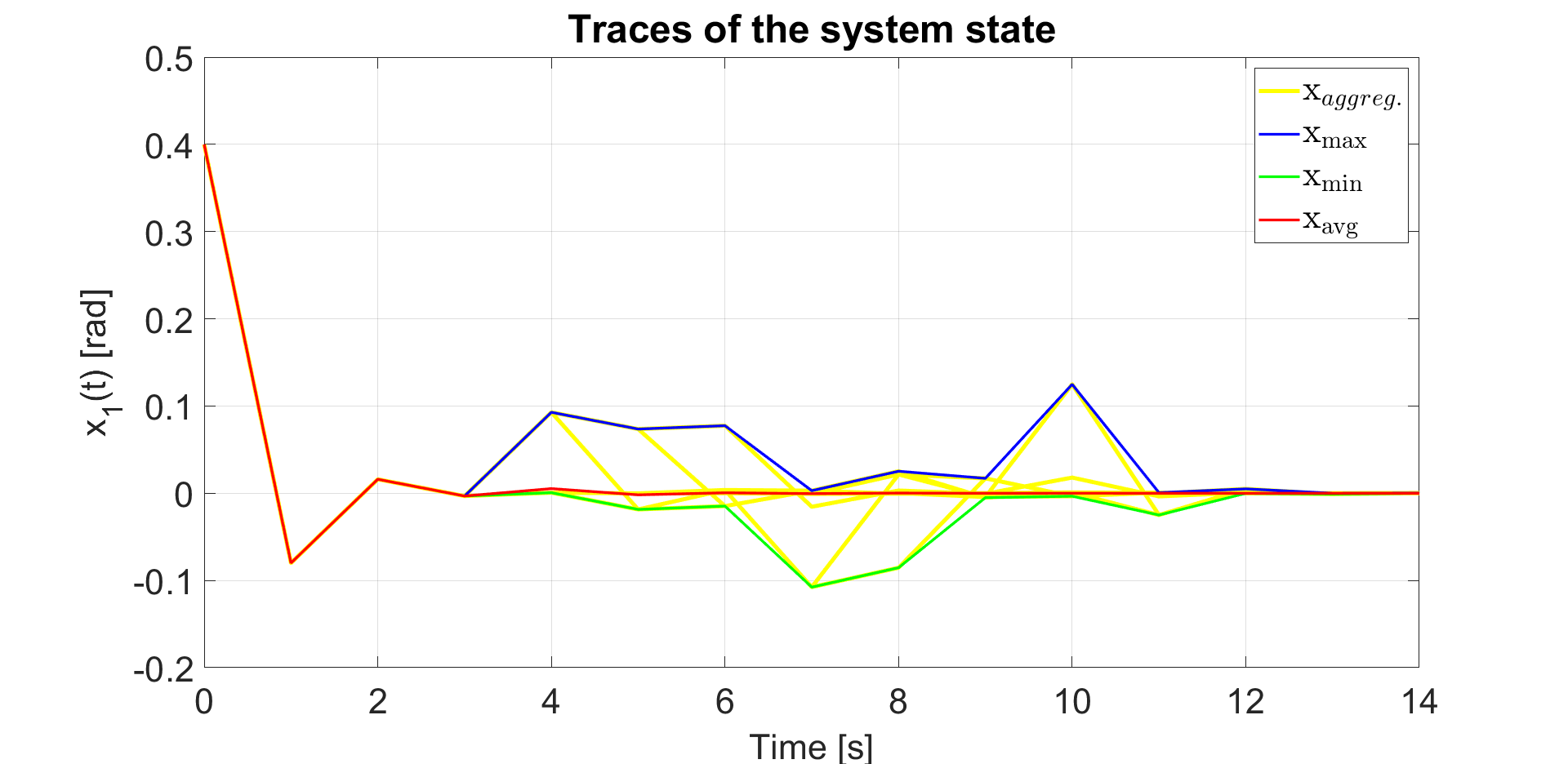}\caption{
Traces of system state obtained with \mbox{\small $\gamma=1.2,$} \mbox{\small $p=0.95,$} and \mbox{\small $q=0.01.$}}
      \label{fig:Sim1_3stable}
\end{figure}
%%%%%%%%%%%%%%%%%%%%%%%%%%%%%%%%%%%%%%%%%%%%%%%%
%%%%%%%%%%%%%%%%%%%%%%%%%%%%%%%
%%%%%%%%%%%%%%%%%%%%%%%%%%%%%%%%%%%%%%%%%%%%%%%%%
% \begin{figure}[ht]
%       \centering
% \includegraphics[scale=0.18]{}
% \caption{The figure shows the region of pairs $(p,q)\in (0,1)\times (0,1)$ such that conditions of Theorem \ref{thm:msscomplete} are satisfied with $\gamma=1$ and $c=e$.} \label{fig:plotpandq_a}
% \includegraphics[scale=0.18]{}
% \caption{The figure shows the region of pairs $(p,q)\in (0,1)\times (0,1)$ such that conditions of Theorem \ref{thm:msscomplete} are satisfied with $\gamma=1.1$ and $c=e$.} \label{fig:plotpandq_b}
% \includegraphics[scale=0.18]{}
% \caption{The figure shows the regions of pairs $(p,q)\in (0,1)\times (0,1)$ such that conditions of Theorem \ref{thm:msscomplete} are satisfied with $\gamma=1.2$ and $c=e$.} \label{fig:plotpandq_c}
% \end{figure}
%%%%%%%%%%%%%%%%%%%%%%%%%%%%%%%%%%%%%%%%%%%%%%%%%%%%%%%%%%%%%%%%%%%%%%%%%%%%%%%%%%%%%%%%%%%%%%%%%%%%%%
\begin{figure*}
      \centering
\subcaptionbox{$c=5.2,\gamma=1$\label{a}} {\includegraphics[scale=0.15]{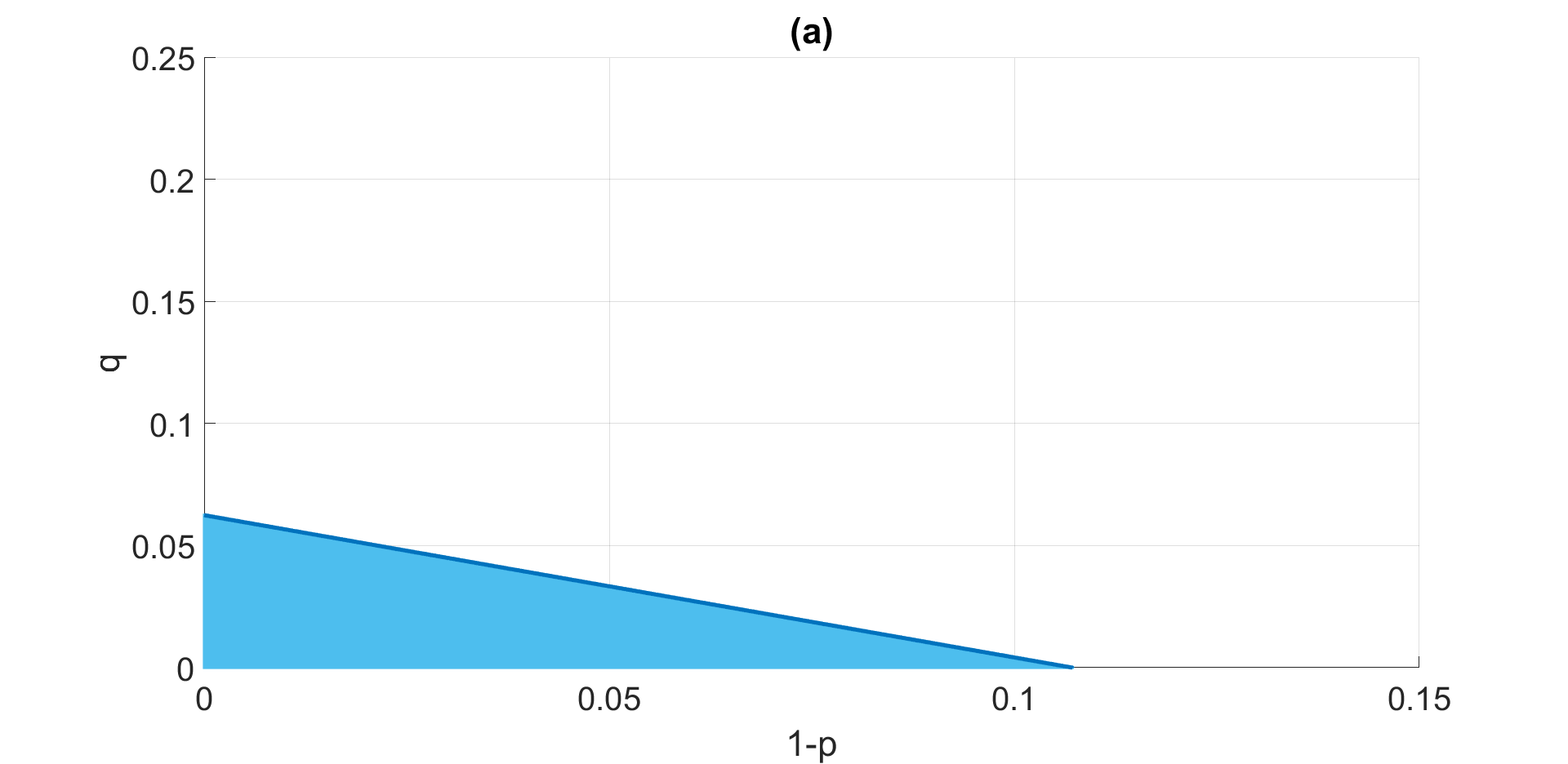}}\subcaptionbox{$c=5.2,\gamma=1.2$\label{b}}{\includegraphics[scale=0.15]{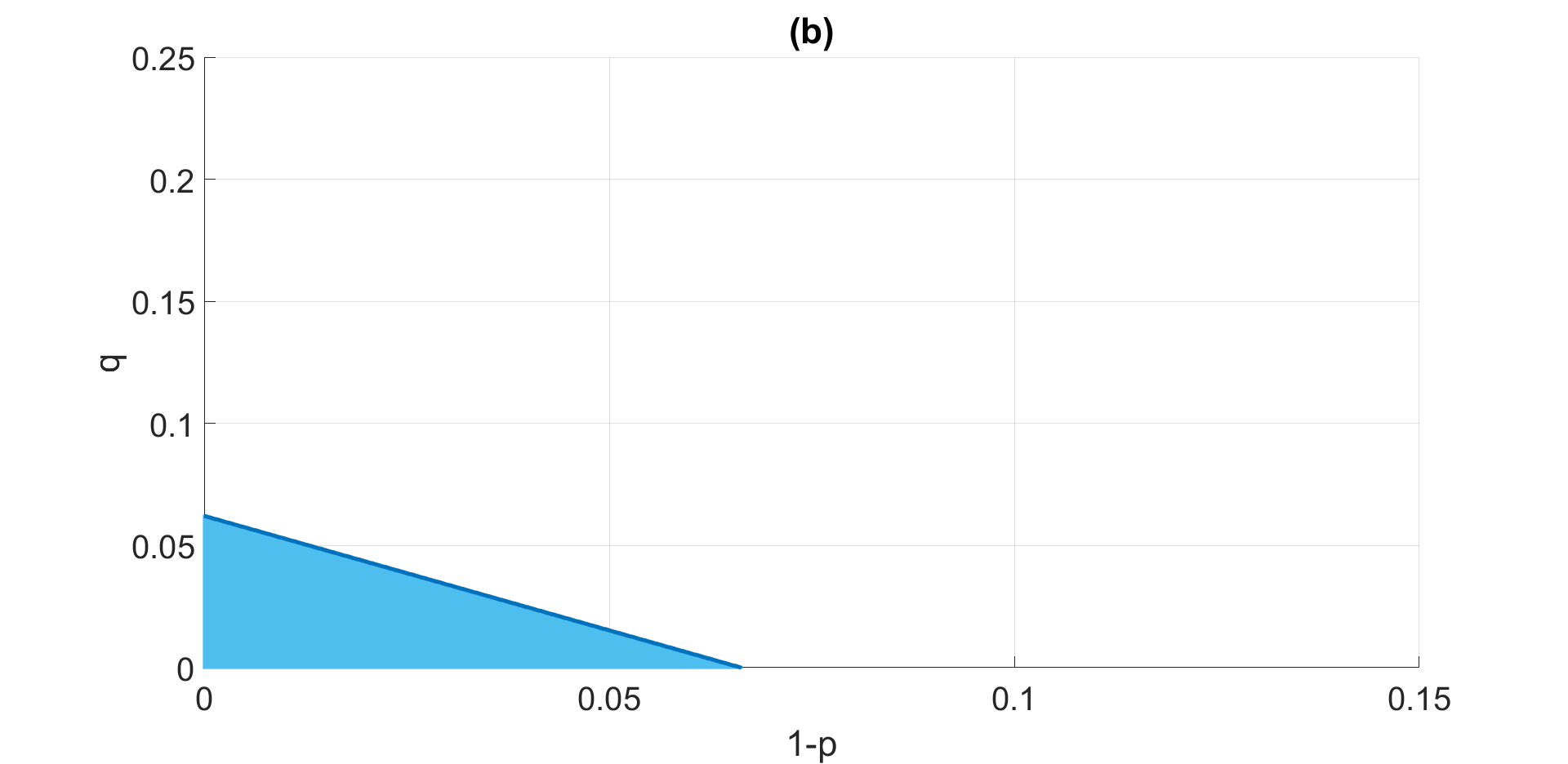}}\\
\subcaptionbox{$c=e,\gamma=1$\label{c}}{\includegraphics[scale=0.15]{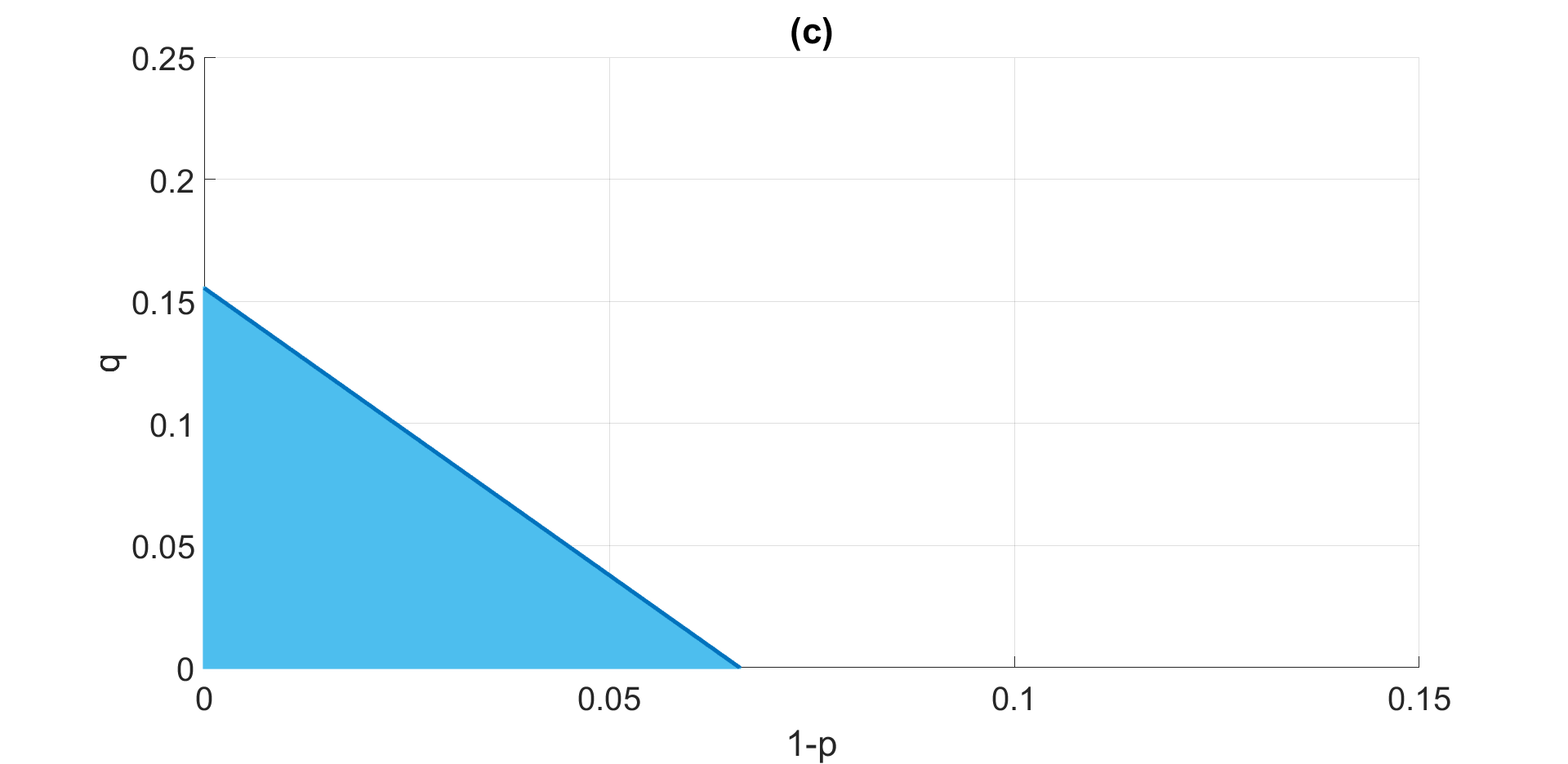}} \subcaptionbox{$c=e,\gamma=1.2$ \label{d}}{\includegraphics[scale=0.15]{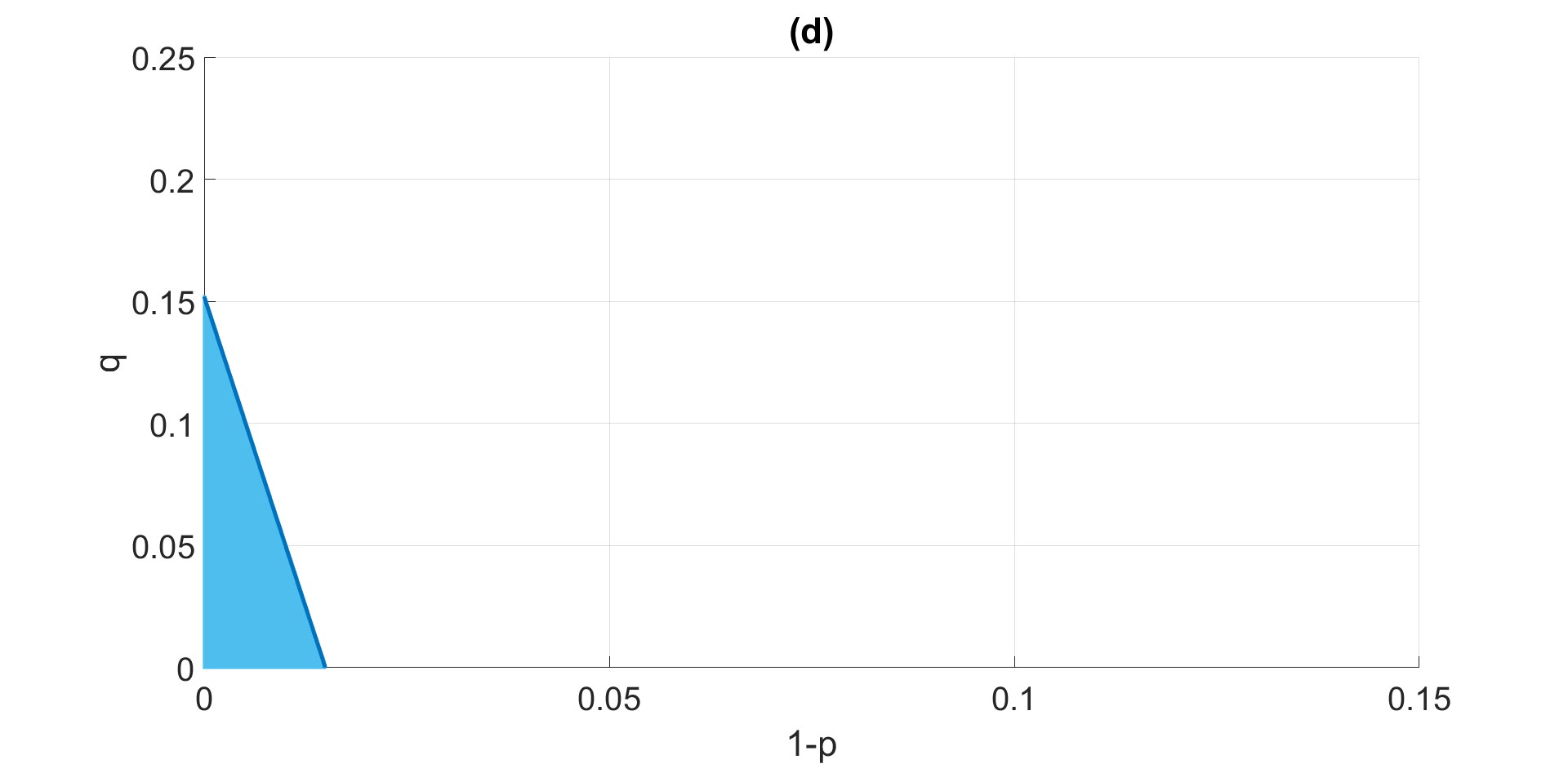}}
\caption{The figure shows the regions of pairs \mbox{\small $(p,q)\in (0,1)\times (0,1)$} such that conditions of Theorem \ref{thm:msscomplete} are satisfied .\label{fig:plotpandq}} 
\end{figure*}
%%%%%%%%%%%%%%%%%%%%%%%%%%%%%%%%%%%%%%%%%
%%%%%%%%%%%%%%%%%%%%%%%%%%%%%%%%%%%%%%%%%%%%%%%%%%%%%%%%%%%%%%%%%%%%%%%%%%%%%%
In Figures \ref{fig:plotpandq},
for different values of \mbox{\small$c$} in the candidate Lyapunov function \eqref{eq:ex1_4},
we show the regions of 
 pairs \mbox{\small$(p,q)$} such that conditions $(i)-(ii)$ of Theorem \ref{thm:msscomplete} are satisfied (light blue region) and the evolution of the maximum \mbox{\small$q$} with respect to \mbox{\small$(1-p)$} such that the conditions $(i)-(ii)$ of Theorem \ref{thm:msscomplete} are satisfied (dark blue line).
From Figures \ref{fig:plotpandq},
by comparing row-wise, we observe that
when \mbox{\small$c$} spans from \mbox{\small$e$} to \mbox{\small$5.2,$} the segment on \mbox{\small$1-p$} shrinks while
segment on \mbox{\small$q$} expands.
From Figures \ref{fig:plotpandq}, by comparing column-wise,
when \mbox{\small$\gamma$} goes from \mbox{\small$1$} to \mbox{\small$1.2,$} the segment on $1-p$ shrinks, the aforementioned light blue region becomes smaller and smaller when the parameter \mbox{\small$\gamma$} increases.
Choosing \mbox{\small$c>e,$} the conditions 
\eqref{eq:ex1_7} lead to a widen set of values for \mbox{\small$1-p,$}
while, by condition \eqref{eq:ex1_7_4}, the values of \mbox{\small$q$} are restricted.
%-------------------------------
\section{CONCLUSIONS}\label{sec:conclusion}
\noindent In this paper, we 
% establish a transformation of nonlinear time-delay systems where delays signals are driven by a Markov chain to nonlinear time-delay Markov jump systems. 
provide sufficient Lyapunov conditions guaranteeing the {\it EMSS} property of discrete-time systems with markovian delays.
\noindent Future work directions would be the extension of sufficient conditions for the exponential input-to-state stability in mean square sense, as well as necessary and  sufficient conditions guaranteeing {\it EMSS} for this class of systems.
%\addtolength{\textheight}{-12cm}   % This command serves to balance the column lengths
                                  % on the last page of the document manually. It shortens
                                  % the textheight of the last page by a suitable amount.
                                  % This command does not take effect until the next page
                                  % so it should come on the page before the last. Make
                                  % sure that you do not shorten the textheight too much.

%%%%%%%%%%%%%%%%%%%%%%%%%%%%%%%%%%%%%%%%%%%%%%%%%%%%%%%%%%%%%%%%%%%%%%%%%%%%%%%%
%%%%%%%%%%%%%%%%%%%%%%%%%%%%%%%%%%%%%%%%%%%%%%%%%%%%%%%%%%%%%%%%%%%%%%%%%%%%%%%%
\bibliography{bib_MED}
%%%%%%%%%%%%%%%%%%%%%%%%%%%%%%%%%%%%%%%%%%%%%%%%%%%%%%%%%%%%%%%%%%%%%%%%%%%%%%%%
\section*{APPENDIX}
\subsection*{A. Preliminary Results Needed for the Proofs of Theorem \ref{thm:msscomplete}}
\noindent In the following, we introduce some technical results that are useful in the proof of Theorem \ref{thm:msscomplete}.
\begin{lem}\label{lem:1}
Let there exist a function $V :\mathcal{C}\times D \to \RR^{+}$, real positive numbers $\alpha_{i}$, $i=1,2,3$, such that, for all $\phi \in \mathcal{C},$ for all $i \in \mathcal{S}$,
the following inequalities hold:
\begin{itemize}
    \item[$a_1)$] $\alpha_{1}\lVert \phi(0) \rVert ^2 \leq V(\phi, H^{-1}(i)) \leq \alpha_{2}\lVert \phi \rVert_{\infty} ^2$,
    \item[$a_2)$] $\mathcal{L} V (\phi, H^{-1}(i))\leq - \alpha_3 \lVert \phi(0) \rVert ^2,$
\end{itemize}
\noindent with the operator $\mathcal{L}V$ defined in \eqref{eq:mathcalLVdef}.
% % \triangleq
% 		{\displaystyle \sum_{j\in \mathcal{S}}} p_{ij} V\Big( F(\phi,H^{-1}(i) ), H^{-1}(j)\Big )- V(\phi, H^{-1}(i)) 
\noindent Then, there exist a function $W :\mathcal{C}\times D \to \RR^{+}$,
real positive numbers $\beta_{i}$, $i=1,2,3$, such that, for all $\phi \in \mathcal{C},$ for all $i \in \mathcal{S}$,
the following inequalities hold:
\begin{itemize}
    \item[$b_1)$] $\beta_{1}\lVert \phi(0) \rVert ^2 \leq W(\phi, H^{-1}(i)) \leq \beta_{2}\lVert \phi \rVert_{\infty} ^2$,
    \item[$b_2)$] $	{\displaystyle \sum_{j\in \mathcal{S}}} p_{ij} W\Big( F(\phi,H^{-1}(i) ), H^{-1}(j)\Big )- W(\phi, H^{-1}(i)) 
		\leq - \beta_3 \lVert \phi \rVert_{\infty}^{2}$,
\end{itemize}
% \mathcal{L} W (\phi, H^{-1}(i)) \triangleq
\noindent where $H^{-1}$ is defined in \eqref{eq:bijective_inverseH}.
\end{lem}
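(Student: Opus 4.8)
The plan is to upgrade the weak Lyapunov inequalities $(a_1)$–$(a_2)$, which only control $\phi(0)$ on their right-hand sides, into strong inequalities $(b_1)$–$(b_2)$ that control the full norm $\|\phi\|_\infty$. The natural device is to define $W$ by summing $V$ along a finite segment of the trajectory, weighted geometrically, so that the telescoping structure of $\mathcal{L}V$ produces a full-norm decrease. Concretely, I would set
\beq
W(\phi,H^{-1}(i)) \triangleq \sum_{\ell=0}^{\Delta} \rho^{\ell}\, \EE\big[\, V\big(x_\ell(\phi), H^{-1}(\eta(\ell))\big) \,\big|\, \eta(0)=i \,\big],
\eeq
where $x_\ell(\phi)$ is the trajectory of \eqref{eq:Fofxk_MarkovJump} started from $\phi$ with $\eta(0)=i$, and $\rho\in(0,1]$ (possibly $\rho=1$) is a constant chosen later; the $\ell=0$ term is just $V(\phi,H^{-1}(i))$. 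The intuition is that over $\Delta$ steps the "$\phi(0)$ slot" sweeps through every component $\phi(-\Delta),\ldots,\phi(0)$ of the initial condition, so a sum of $\|\cdot\|^2$ of the zeroth coordinate along the trajectory dominates $\|\phi\|_\infty^2$.

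The key steps, in order. First, establish the upper bound in $(b_1)$: since each $V(x_\ell,\cdot)\le \alpha_2\|x_\ell\|_\infty^2$ and $\|x_\ell\|_\infty^2 \le$ (a bound depending on $\|\phi\|_\infty^2$ plus the sup of $\|x(m)\|^2$ over $m\le\ell$), one uses a discrete Grönwall/iteration argument on \eqref{eq:FofphiH1ofi} — noting $f(0,\ldots,0)=0$ — to show $\EE[\|x_\ell\|_\infty^2]\le c_\ell\|\phi\|_\infty^2$ for finite constants $c_\ell$; summing over $\ell=0,\ldots,\Delta$ gives $\beta_2$. Second, the lower bound in $(b_1)$ is immediate from the $\ell=0$ term: $W(\phi,H^{-1}(i))\ge V(\phi,H^{-1}(i))\ge \alpha_1\|\phi(0)\|^2$, so $\beta_1=\alpha_1$ works. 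Third, and this is the heart, compute the one-step difference $\sum_j p_{ij}W(F(\phi,H^{-1}(i)),H^{-1}(j)) - W(\phi,H^{-1}(i))$. By the tower property of conditional expectation and the Markov property, $\sum_j p_{ij}\,\EE[V(x_\ell(F(\phi,i)),\eta(\ell))\mid\eta(0)=j] = \EE[V(x_{\ell+1}(\phi),\eta(\ell+1))\mid\eta(0)=i]$; hence the difference telescopes to
\beq
\sum_{\ell=0}^{\Delta}\rho^{\ell}\Big(\EE[V(x_{\ell+1},\eta(\ell+1))\mid\eta(0)=i] - \EE[V(x_\ell,\eta(\ell))\mid\eta(0)=i]\Big) + (\text{boundary terms}),
\eeq
which, using condition $(a_2)$ in the form $\EE[\mathcal{L}V(x_\ell,\cdot)]\le -\alpha_3\EE[\|x(\ell)\|^2]$ at each step, is bounded above by $-\alpha_3\sum_{\ell=0}^{\Delta}\rho^{\ell}\EE[\|x(\ell)\|^2]$ up to the $\rho$-mismatch terms. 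Fourth, the dominating step: since $x(\ell)=x_\ell(\phi)(0)$ and for $\ell\le\Delta$ the trajectory still "remembers" the initial data — in fact $x_0(\phi)(-\ell)=\phi(-\ell)$, and one can track how each $\phi(-j)$ propagates — one shows $\sum_{\ell=0}^{\Delta}\EE[\|x(\ell)\|^2]\ge \kappa\|\phi\|_\infty^2$ for some $\kappa>0$, possibly after taking $\rho$ close enough to $1$. That yields $(b_2)$ with $\beta_3 = \alpha_3\kappa$ (adjusted by the $\rho$ factor).

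The main obstacle I anticipate is the fourth step, the reverse bound $\sum_{\ell=0}^{\Delta}\EE[\|x(\ell)\|^2]\gtrsim \|\phi\|_\infty^2$. It is \emph{not} true in general that $\|x(\ell)\|$ is large just because some $\|\phi(-j)\|$ is large — the dynamics $f$ could contract. The honest fix is to not use the raw state but to exploit the shift structure of $F$ in \eqref{eq:FofphiH1ofi}: the component $\phi(-j)$ for $j\ge 1$ literally reappears as $x_\ell(\phi)(0)$ is built only from $f$, but $x_\ell(\phi)(-j+\ell)=\phi(-j)$ whenever $\ell\le j$, i.e. the initial data is still present as \emph{coordinates} of $x_\ell$ for $\ell<\Delta$. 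So rather than bounding $\EE[\|x(\ell)\|^2]$ from below, I would instead keep $W$ defined via $V(x_\ell,\cdot)$ and use the \emph{upper} bound of $(a_1)$ more cleverly, or — cleaner — define $W$ directly as a geometric sup/sum over the coordinates,
\beq
W(\phi,H^{-1}(i)) \triangleq \sum_{\ell=0}^{\Delta}\rho^{\ell}\,\EE\big[\,V\big(x_\ell(\phi),H^{-1}(\eta(\ell))\big)\,\big|\,\eta(0)=i\,\big] + \mu\sum_{j=1}^{\Delta}\gamma_j\|\phi(-j)\|^2
\eeq
with small $\mu>0$ and weights $\gamma_j>0$, and absorb the cross terms generated by the shift coordinates into the strict decrease coming from $(a_2)$ for $\mu$ small enough. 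This is the standard trick (as in the delay-to-switching conversions of \cite{Pepe2020,Pepe2018OnLyap}) and I expect the bookkeeping of choosing $\mu$, $\rho$, and the $\gamma_j$ to close the inequalities to be the technically delicate part, while every other step is routine.
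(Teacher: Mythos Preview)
Your trajectory-sum construction is more elaborate than needed, and one step has a real gap. For the upper bound in $(b_1)$ you appeal to a discrete Gr\"onwall iteration using only $f(0,\ldots,0)=0$; but the paper places no continuity or growth hypothesis on $f$, so $\EE[\|x_\ell\|_\infty^2]\le c_\ell\|\phi\|_\infty^2$ cannot be obtained that way. (It is salvageable: condition $(a_2)$ makes $\EE[V(x_\ell,\cdot)\mid\eta(0)=i]$ a supermartingale bounded by $V(\phi,H^{-1}(i))\le\alpha_2\|\phi\|_\infty^2$, which is all you actually need.) You also correctly diagnose that the reverse bound $\sum_{\ell=0}^{\Delta}\EE[\|x(\ell)\|^2]\gtrsim\|\phi\|_\infty^2$ fails in general, since $x(1),\ldots,x(\Delta)$ are values of $f$ and carry no lower bound in terms of $\phi(-1),\ldots,\phi(-\Delta)$.

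The paper's construction sidesteps all of this. It takes
\[
W(\phi,H^{-1}(i)) \;=\; V(\phi,H^{-1}(i)) \;+\; \alpha_3\max_{\theta=1,\ldots,\Delta} e^{-\theta}\|\phi(-\theta)\|^2,
\]
with no trajectory sums and no expectations. The added term is deterministic and mode-independent, so it passes through $\sum_j p_{ij}$ untouched; and because $F(\phi,\cdot)(-\theta)=\phi(-\theta+1)$ for $\theta\ge1$, the pure shift structure plus the geometric weight $e^{-\theta}$ produces a strict decrease of order $\|\phi\|_\infty^2$, while the $\mathcal{L}V$ part supplies $-\alpha_3\|\phi(0)\|^2$ to absorb the $\theta=0$ leftover. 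This is precisely the mechanism of your ``cleaner'' fix $\mu\sum_j\gamma_j\|\phi(-j)\|^2$: once you add that term, it does all the work by itself, and the trajectory sum $\sum_\ell\rho^\ell\EE[V(x_\ell,\cdot)]$ is dead weight---drop everything but its $\ell=0$ term and you recover the paper's argument.
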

\begin{proof}
Let us consider the function $W :\mathcal{C}\times D \to \RR^{+}$ defined, for $\phi \in \mathcal{C},$ $i \in \mathcal{S}$, as (we are inspired by \cite[Lemma 1]{Pepe2018OnLyap}, \cite[Lemma 1]{Pepe2020} which deal with the deterministic case):
\beq
W(\phi,H^{-1}(i) ) = V(\phi,H^{-1}(i)) + \max_{\theta=1,2,\ldots, \Delta} e^{-\theta} \alpha_3 \lVert \phi(-\theta) \rVert^{2}.
\eeq
Then, from $(a_1)$, we have
\begin{align}
    \beta_{1}\lVert \phi(0) \rVert ^2 \leq W(\phi, H^{-1}(i))  \leq \beta_{2}\lVert \phi \rVert_{\infty} ^2,
\end{align}
\noindent with $\beta_1=\alpha_1$ and $\beta_2 = \alpha_2 + \alpha_3$.\\
\noindent From $(a_2)$, we obtain
\begin{align*}
    &\sum_{j\in \mathcal{S}} p_{ij} W\Big( F(\phi,H^{-1}(i)), H^{-1}(j)\Big) - W(\phi, H^{-1}(i))  =\\
    &\sum_{j\in \mathcal{S}} p_{ij} V\Big( F(\phi,H^{-1}(i)), H^{-1}(j)\Big) - V(\phi,H^{-1}(i))\\
    &+ \max_{\theta=1,2,\ldots, \Delta} e^{-\theta} \alpha_3 \lVert \phi(-\theta+1) \rVert^{2}\\
    &- \max_{\theta=1,2,\ldots, \Delta} e^{-\theta} \alpha_3 \lVert \phi(-\theta) \rVert^{2}\\
    &\leq  - \alpha_3 \lVert \phi(0) \rVert ^2\\
    &+ e^{-1} \max_{\theta=1,2,\ldots, \Delta} e^{1-\theta} \alpha_3 \lVert \phi(-\theta+1) \rVert^{2}\\
    &- \max_{\theta=1,2,\ldots, \Delta} e^{-\theta} \alpha_3 \lVert \phi(-\theta) \rVert^{2}\\
    &\leq - \alpha_3 \lVert \phi(0) \rVert ^2\\
    &+ e^{-1} \max_{\theta=0,1,\ldots, \Delta-1} e^{-\theta} \alpha_3 \lVert \phi(-\theta) \rVert^{2}\\
    &- \max_{\theta=1,2,\ldots, \Delta} e^{-\theta} \alpha_3 \lVert \phi(-\theta) \rVert^{2}\\
    &\leq - \alpha_3 \lVert \phi(0) \rVert ^2 +  e^{-1} \alpha_3 \lVert \phi(0) \rVert ^2\\
    &+ e^{-1} \max_{\theta=1,2,\ldots, \Delta} e^{-\theta} \alpha_3 \lVert \phi(-\theta) \rVert^{2}\\
    &- \max_{\theta=1,2,\ldots, \Delta} e^{-\theta} \alpha_3 \lVert \phi(-\theta) \rVert^{2}\\
    &\leq - ( 1- e^{-1}) \alpha_3 \lVert \phi(0) \rVert ^2\\
    & -( 1- e^{-1}) \alpha_3 e^{-\Delta} \max_{\theta=1,2,\ldots, \Delta}\lVert \phi(-\theta) \rVert^{2}\\
    &\leq -( 1- e^{-1}) \alpha_3 e^{-\Delta}\lVert \phi\rVert_{\infty}^{2}.
\end{align*}
\noindent Define $\beta_3 \triangleq ( 1- e^{-1}) \alpha_3 e^{-\Delta}$. Then, the function $W$ satisfies $(b_1),$ $(b_2)$. This completes the proof. 
\end{proof}
%%%%%%%%%%%%%%%%%%%%%%%%%%%%%%%%%%%%%%%%%%%%%%%%%%%%%%%%%%%%%%%%%%%%%%%%%%%%%%%%%%%%%%%%%%%%%%%%%%%%%%%%%%%%%
\begin{lem}\label{lem:mssnew}
	Assume that there exist a function $V:\mathcal{C}\times D \to \RR^{+}$, real positive numbers $\gamma_{i}$, $i=1,2,3$
	such that, for all $\phi \in \mathcal{C},$ for all $i \in \mathcal{S}$,
	 the following inequalities hold:
	\begin{itemize}
		\item[$c_1)$] $\gamma_{1}\lVert \phi(0) \rVert ^2 \leq V(\phi, H^{-1}(i)) \leq \gamma_{2}\lVert \phi \rVert_{\infty} ^2$,
    \item[$c_2)$] $\mathcal{L} V (\phi, H^{-1}(i)) 
		\leq - \gamma_3 \lVert \phi \rVert_{\infty}^{2},$
	\end{itemize}
% 	\triangleq
% 		{\displaystyle \sum_{j\in \mathcal{S}}} p_{ij} V\Big( F(\phi,H^{-1}(i) ), H^{-1}(j)\Big )- V(\phi, H^{-1}(i)) 
	\noindent
	where $H^{-1}$ is defined in \eqref{eq:bijective_inverseH}, and the operator $\mathcal{L}V$ is defined in \eqref{eq:mathcalLVdef}.
	Then, the system $\Sigma$ is 
	{\it EMSS}.
\end{lem}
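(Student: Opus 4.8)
The plan is to run the classical discrete-time stochastic Lyapunov argument, treating $\mathcal{L}V$ as the one-step conditional decrement of $V$ along trajectories of $\Sigma$. The first step is the key identity: since $x_k(\xi_0)$ is $\mathcal{G}_k$-measurable, $x_{k+1}(\xi_0)=F(x_k(\xi_0),H^{-1}(\eta(k)))$, and $\{\eta(k)\}$ is a Markov chain adapted to $\{\mathcal{G}_k\}$ with TPM $P$, conditioning on $\mathcal{G}_k$ and summing over the values $j\in\mathcal{S}$ that $\eta(k+1)$ may take (using $\PP(\eta(k+1)=j\mid\mathcal{G}_k)=p_{\eta(k)j}$) gives, straight from \eqref{eq:mathcalLVdef},
\[
\EE\big[V(x_{k+1}(\xi_0),H^{-1}(\eta(k+1)))\mid\mathcal{G}_k\big]=V(x_k(\xi_0),H^{-1}(\eta(k)))+\mathcal{L}V(x_k(\xi_0),H^{-1}(\eta(k))).
\]

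Next I would turn $(c_2)$ into a geometric contraction on $V$. The upper bound in $(c_1)$ gives $\lVert\phi\rVert_{\infty}^2\ge V(\phi,H^{-1}(i))/\gamma_2$, so $(c_2)$ implies $\mathcal{L}V(\phi,H^{-1}(i))\le-(\gamma_3/\gamma_2)V(\phi,H^{-1}(i))$ for all $\phi\in\mathcal{C}$, $i\in\mathcal{S}$; moreover, since $V\ge0$ and the rows of $P$ sum to one, $\mathcal{L}V(\phi,H^{-1}(i))\ge-V(\phi,H^{-1}(i))\ge-\gamma_2\lVert\phi\rVert_{\infty}^2$, which together with $(c_2)$ forces $\gamma_3\le\gamma_2$, so $\zeta\triangleq1-\gamma_3/\gamma_2\in[0,1)$. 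Substituting into the identity above and taking total expectation yields $\EE[V(x_{k+1}(\xi_0),H^{-1}(\eta(k+1)))]\le\zeta\,\EE[V(x_k(\xi_0),H^{-1}(\eta(k)))]$; iterating from $k=0$ and using $\EE[V(x_0(\xi_0),H^{-1}(\eta(0)))]=\EE[V(\xi_0,H^{-1}(\eta(0)))]\le\gamma_2\lVert\xi_0\rVert_{\infty}^2$ from $(c_1)$ gives $\EE[V(x_k(\xi_0),H^{-1}(\eta(k)))]\le\gamma_2\zeta^k\lVert\xi_0\rVert_{\infty}^2$ for all $k\in\NN$.

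Finally I would use the lower bound in $(c_1)$: $\gamma_1\lVert x(k,\xi_0)\rVert^2=\gamma_1\lVert x_k(\xi_0)(0)\rVert^2\le V(x_k(\xi_0),H^{-1}(\eta(k)))$, hence $\EE[\lVert x(k,\xi_0)\rVert^2]\le(\gamma_2/\gamma_1)\zeta^k\lVert\xi_0\rVert_{\infty}^2$. Setting $M\triangleq\gamma_2/\gamma_1$, which is $\ge1$ because $(c_1)$ applied to a nonzero constant function forces $\gamma_1\le\gamma_2$, this is precisely the bound in Definition \ref{def:EMSS} (the $k=0$ case holds since $M\ge1$), and if $\zeta=0$ one replaces it by any value in $(0,1)$, which only weakens the inequality; hence $\Sigma$ is \emph{EMSS}. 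The one point that needs care is the conditional-expectation identity in the first step: specifically, justifying $\PP(\eta(k+1)=j\mid\mathcal{G}_k)=p_{\eta(k)j}$ and that the $\mathcal{G}_k$-measurable quantity $x_k(\xi_0)$ pulls out of the conditional expectation so that $\mathcal{L}V$ appears; everything downstream of that is elementary algebra and iteration.
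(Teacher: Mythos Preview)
Your proof is correct and follows essentially the same route as the paper: identify $\mathcal{L}V$ as the one-step conditional increment of $V$ along $(x_k,\eta(k))$, combine $(c_2)$ with the upper bound in $(c_1)$ to obtain the geometric contraction $\EE[V(x_{k+1},H^{-1}(\eta(k+1)))]\le(1-\gamma_3/\gamma_2)\EE[V(x_k,H^{-1}(\eta(k)))]$, iterate, and finish with the lower bound in $(c_1)$. Your treatment is in fact slightly more careful than the paper's, which simply asserts $M\ge1$ and handles $\gamma_3/\gamma_2<1$ by a ``without loss of generality'' remark, whereas you derive both from $(c_1)$--$(c_2)$ directly.
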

\begin{proof}
From $(c_2)$, for all $x_k \in \mathcal{C}$, for all $\eta(k)\in\mathcal{S}$, $k\in\NN$, we have:
\begin{align}\label{eq:Lyap_1}
    &\mathcal{L} V(x_k, H^{-1}(\eta(k)))=\nonumber\\
    &=\sum_{\eta(k+1) \in \mathcal{S}} \!\!\!\!p_{\eta(k)\eta(k+1)} V(x_{k+1}, H^{-1}(\eta(k+1))) \nonumber
    \\&- V(x_k, H^{-1}(\eta(k))) \leq - \gamma_3 \lVert x_k\rVert_{\infty}^{2}.
\end{align}
\noindent By the Markov property, from \eqref{eq:Lyap_1}, we obtain:
\begin{align}\label{eq:Lyap_2}
\EE\Big[\Big(V(x_{k+1}, H^{-1}(\eta(k+1))) &- V(x_k, H^{-1}(\eta(k)))\Big) | \mathcal{G}_{k}\Big]\nonumber\\
&\leq -\gamma_3\lVert x_k\rVert_{\infty}^{2}.
\end{align}
\noindent From \eqref{eq:Lyap_2}, applying the property of the expected value conditioned to a filtration, the following inequality holds:
\begin{align}\label{eq:Lyap_3}
    \EE\Big[V(x_{k+1}, H^{-1}(\eta(k+1))) &- V(x_k, H^{-1}(\eta(k))) \Big]\nonumber\\ 
    &\leq -\gamma_3 \EE[\lVert x_k\rVert_{\infty}^{2}].
\end{align}
\noindent Using the linearity of the expected value, from \eqref{eq:Lyap_3} we obtain:
\begin{align}\label{eq:Lyap_4}
\EE[V(x_{k+1}, H^{-1}(\eta(k+1)))]&-\EE[ V(x_{k}, H^{-1}(\eta(k)))] \nonumber\\
&\leq -\gamma_3 \EE[\lVert x_k\rVert_{\infty}^{2}].
\end{align}
From $(c_1)$, it follows that
\beq\label{eq:Lyap_5}
\EE[\lVert x_k \rVert_{\infty}^2] \geq \dfrac{1}{\gamma_2 }\EE[V(x_k, H^{-1}(\eta(k))].
\eeq
Using \eqref{eq:Lyap_4} and \eqref{eq:Lyap_5}, we have
\begin{align}\label{eq:Lyap_6}
\EE[V(x_{k+1}, H^{-1}&(\eta(k+1)))]-\EE[ V(x_k, H^{-1}(\eta(k)))]\nonumber\\
&\leq -\dfrac{\gamma_3}{\gamma_2} \EE[V(x_k, H^{-1}(\eta(k)))].
\end{align}
\noindent Let $\gamma_4 \triangleq \dfrac{\gamma_3}{\gamma_2}$, notice that $\gamma_4>0$, since $\gamma_3,\,\gamma_2>0$ . 
Without loss of generality, pick $\gamma_4 <1.$
From \eqref{eq:Lyap_6}, it follows that
\beq\label{eq:Lyap_7}
\EE[V(x_{k+1}, H^{-1}(\eta(k+1)))]\leq (1 -\gamma_4) \EE[V(x_k, H^{-1}(\eta(k)))].
\eeq
Using recursive argument, from \eqref{eq:Lyap_7}, we have
\beq\label{eq:Lyap_8}
\EE[V(x_k, H^{-1}(\eta(k)))]\leq (1 -\gamma_4)^k \EE[V(\xi_0, H^{-1}(\eta(0)))].
\eeq
From $(c_1)$, for $k\in\NN$,
\begin{align}\label{eq:Lyap_9}
&\gamma_1 \EE[\lVert x(k) \rVert^2]\leq \EE[V(x_k, H^{-1}(\eta(k)))],\nonumber\\
&(1-\gamma_4)^k \EE[V(\xi_0,H^{-1}(\eta(0)))] \leq (1-\gamma_4)^k \gamma_2\EE[\lVert \xi_0\rVert_{\infty}^2].
\end{align}
\noindent From \eqref{eq:Lyap_8} and \eqref{eq:Lyap_9}, it follows that
\beq\label{eq:Lyap_10}
\gamma_1 \EE[\lVert x(k) \rVert^2] \leq (1-\gamma_4)^k \gamma_2 \EE[\lVert \xi_0\rVert_{\infty}^2].
\eeq
\noindent From \eqref{eq:Lyap_10}, the following inequality holds
\begin{align}\label{eq:Lyap_11}
    \EE[\lVert x(k) \rVert^2] &\leq  (1-\gamma_4)^k \dfrac{\gamma_2}{\gamma_1} \EE[\lVert \xi_0\rVert_{\infty}^2]\nonumber\\
                              & \leq  (1-\gamma_4)^k \dfrac{\gamma_2}{\gamma_1} \lVert \xi_0\rVert_{\infty}^2.
\end{align}
\noindent By defining $M\triangleq\dfrac{\gamma_2}{\gamma_1} \geq 1$ and $\zeta \triangleq (1-\gamma_4)$, with $0<\zeta<1$, from \eqref{eq:Lyap_11},
\beq\label{eq:Lyap_12}
\EE[\lVert x(k) \rVert^2] \leq M \zeta^k  (\lVert \xi_0\rVert_{\infty})^2.
\eeq
Thus, the system $\Sigma$ is {\it EMSS}.
\end{proof}
%%%%%%%%%%%%%%%%%%%%%%%%%%%%%%%%%%%%%%%%%%%%%%%%%%%%%%%%%%%%%%%%%%%%%%%%
\subsection*{B. Proof of Theorem \ref{thm:msscomplete}}
\begin{proof}
\noindent From $(i)$-$(ii)$, by Lemma \ref{lem:1}, it follows 
 that there exist a function $V: \mathcal{C} \times D \to \RR^{+}$, $\gamma_{i}\in\RR^{+}$, $i=1,2,3$
such that, for all $\phi \in \mathcal{C},$ for all $i \in \mathcal{S}$, the following inequalities hold:
\begin{itemize}
	\item[$c_1)$] $\gamma_{1}\lVert \phi(0) \rVert ^2 \leq V(\phi, H^{-1}(i)) \leq \gamma_{2}\lVert \phi \rVert_{\infty} ^2$,
    \item[$c_2)$] $\mathcal{L} V (\phi, H^{-1}(i)) 	\leq - \gamma_3 \lVert \phi \rVert_{\infty}^{2},$
\end{itemize}
\noindent with $\mathcal{L}V$ defined in \eqref{eq:mathcalLVdef}.
% \triangleq
% 		{\displaystyle \sum_{j\in \mathcal{S}}} p_{ij} V\Big( F(\phi,H^{-1}(i) ), H^{-1}(j)\Big )- V(\phi, H^{-1}(i)) 

\noindent From Lemma \ref{lem:mssnew}, we obtain that the system $\Sigma$ is {\it EMSS}.
\end{proof}
\end{document}